\newtheorem{theorem}{Theorem}
\newtheorem{corollary}{Corollary}
\newtheorem{lemma}{Lemma}
\newtheorem{remark}{Remark}
\makeatletter\renewcommand{\ALG@name}{Algorithm}\makeatother
\begin{document}

\title{Meeting in a Polygon by Anonymous Oblivious Robots}

\author[1]{G.\,A. Di Luna}
       \author[2]{P. Flocchini}
        \author[3]{N. Santoro}
        \author[4]{G. Viglietta}
        \author[5]{M. Yamashita}
        
\affil[1]{{\footnotesize Aix-Marseille University, LIS Laboratory,
	163, avenue de Luminy, Marseille, France,
	\texttt{giuseppe.diluna@lif.univ-mrs.fr}}}
\affil[2]{{\footnotesize University of Ottawa, School of Electrical Engineering and Computer Science,
              800 King Edward Avenue, Ottawa, Ontario,
              \texttt{ paola.flocchini@uottawa.ca}}}
\affil[3]{{\footnotesize Carleton University, School of Computer Science,
            1125 Colonel By Drive, Ottawa, Ontario,
           \texttt{santoro@scs.carleton.ca}}}
\affil[4]{{\footnotesize JAIST, School of Information Science,
           1-1 Asahidai, Nomi, Ishikawa 923-1211, Japan,
           \texttt{johnny@jaist.ac.jp}}}
\affil[5]{{\footnotesize Kyushu University, Department of Computer Science and Communication Engineering, 
           Motooka, Fukuoka, 819-0395, Japan,
           \texttt{mark@inf.kyushu-u.ac.jp}}}

\date{}

\maketitle

\begin{abstract}
The \emph{Meeting problem} for $k\geq 2$ searchers in a polygon $P$ (possibly with holes) consists in making the searchers move within $P$, according to a distributed algorithm, in such a way that at least two of them eventually come to see each other, regardless of their initial positions. The polygon is initially unknown to the searchers, and its edges obstruct both movement and vision. Depending on the shape of $P$, we minimize the number of searchers $k$ for which the Meeting problem is solvable. Specifically, if $P$ has a rotational symmetry of order $\sigma$ (where $\sigma=1$ corresponds to no rotational symmetry), we prove that $k=\sigma+1$ searchers are sufficient, and the bound is tight. Furthermore, we give an improved algorithm that optimally solves the Meeting problem with $k=2$ searchers in all polygons whose barycenter is not in a hole (which includes the polygons with no holes). Our algorithms can be implemented in a variety of standard models of mobile robots operating in Look-Compute-Move cycles. For instance, if the searchers have memory but are anonymous, asynchronous, and have no agreement on a coordinate system or a notion of clockwise direction, then our algorithms work even if the initial memory contents of the searchers are arbitrary and possibly misleading. Moreover, oblivious searchers can execute our algorithms as well, encoding information by carefully positioning themselves within the polygon. This code is computable with basic arithmetic operations (provided that the coordinates of the polygon's vertices are algebraic real numbers in some global coordinate system), and each searcher can geometrically construct its own destination point at each cycle using only a compass and a straightedge. We stress that such memoryless searchers may be located anywhere in the polygon when the execution begins, and hence the information they initially encode is arbitrary. Our algorithms use a self-stabilizing map construction subroutine which is of independent interest.
\end{abstract}

\newpage

\section{Introduction}\label{s:1}

\subsection{Framework}\label{s:1.1}
\paragraph{Meeting problem.}
Consider a set of $k\geq 2$ autonomous mobile robots, modeled as geometric points located in a polygonal enclosure $P$, which may contain holes. The boundary of $P$ limits both visibility and mobility, in that robots cannot move or see through the edges of $P$. Each robot observes the visible portion of $P$ (taking an instantaneous snapshot of it), executes an algorithm to compute a visible destination point, and then moves to that point. Such a \emph{Look-Compute-Move cycle} is repeated forever by every robot, each time taking a new snapshot and moving to a newly computed point.\footnote{The typical assumption in this model is that a searcher's local reference frame retains its orientation, scale, and handedness after each move. We will make this assumption as well, although it is not strictly needed by our algorithms (see the footnote in Section~\ref{s:2}).} In this paper we study the \emph{Meeting problem}, which prescribes the $k$ robots to move in such a way that eventually at least two of them come to see each other and become ``mutually aware''. We will refer to these robots as \emph{$P$-searchers}, or simply \emph{searchers}.

\paragraph{Searchers' limitations.}
Our searchers are severely limited, which makes the Meeting problem harder to solve.
\begin{itemize}
\item They do not know the shape of $P$ in advance, nor their whereabouts within $P$.
\item They are anonymous, implying that they all execute the same algorithm to determine their destination points.
\item They are oblivious, meaning that each destination point is computed based only on the last snapshot taken, while older snapshots are forgotten, and no memory is retained between cycles.\footnote{In Section~\ref{s:3} we will drop this assumption in order to give a cleaner exposition of our algorithms. In Section~\ref{s:4} we will restore the assumption and show how to extend our algorithms to oblivious searchers.}
\item They are deterministic, meaning that they cannot resort to randomness in their computations. Hence, their task has to be accomplished in all cases, as opposed to ``with high probability''.
\item They are asynchronous, in the sense that we make no assumptions on how fast each searcher completes a Look-Compute-Move cycle compared to the others. These parameters are dynamic and are controlled by an adversarial \emph{scheduler}.
\item They are disoriented, which means that they have no magnetic compasses, GPS devices, or agreements of any kind. Each searcher has its own independent local orientation, unit of length, and handedness.
\item They are silent, in that they cannot communicate with one another in any way. In particular, there is no shared memory, and the information contained in a snapshot can only be accessed by the searcher who took that snapshot.
\item They have arbitrary initial locations within $P$.
\end{itemize}
The polygon $P$ is anonymous, as well. In particular, its vertices do not carry labels, and can only be distinguished by their relative positions. There are no other landmarks or objects that can help the searchers orient themselves. Our goal is to design an algorithm that allows these searchers to solve the Meeting problem regardless of their initial locations within $P$ and regardless of how the adversarial scheduler decides to control their behaviors (by slowing down some and speeding up others). We emphasize that a searcher's snapshot contains the full geometry of the visible part of $P$, in contrast with other models where only some geometric information is available (e.g., only angles and no distances, as in~\cite{angles}).

\paragraph{Applications.}
In real-life applications, being in line of sight may allow robots to communicate in environments where non-optical means of communication are unavailable or impractical. For instance, free-space optical communication~\cite{S2} is a technology that uses a laser to implement a communication channel that is resilient to jamming and radio-frequency noise.

Solving the Meeting problem is a necessary preliminary step to more complex tasks, such as space coverage~\cite{C1} or the extensively studied \emph{Gathering problem}, where all $k$ robots have to physically reach the same point and stop there. In the special case of $k=2$ robots, the Gathering problem is also called \emph{Rendezvous problem}. Clearly, the terminating condition of the Meeting problem is more relaxed than that of Gathering; hence, any solution to the Gathering problem would also solve Meeting. Unfortunately, no solution to the Gathering problem in the setting considered here exists in the literature (see Section~\ref{s:1.3}), and to the best of our knowledge there are no previous results on the Meeting problem.

In fact, given our searchers' many handicaps, and especially their lack of memory and orientation, it is hard to see how they could solve any non-trivial problem at all. Nonetheless, in this paper we will present the surprising result that the Meeting problem is solvable in almost every polygon, even for $k=2$ searchers.

\subsection{Our Contributions}\label{s:1.2}
\paragraph{Techniques.}
Since our searchers are disoriented and have no kind of \emph{a-priori} agreement, they must use the geometric features of $P$ to implicitly agree on some ``landmarks'' which can help them in their task. In order to identify such landmarks, each searcher has to visit $P$ and construct a map of it. But this cannot be done straightforwardly, because searchers are oblivious, and they forget everything as soon as they move. To cope with this handicap, they carefully move within $P$ in such a way as to implicitly encode information as their distance from the closest vertex.

This positional encoding technique poses some obvious difficulties. First, it greatly limits the freedom of the searchers: they have to do precise movements to encode the correct information, and still manage to visit all of $P$ and update the map as they go. Second, since searchers can be located anywhere in $P$ when the execution starts, they could be implicitly encoding anything. This includes misleading information, such as a false map of $P$ that happens to be locally coherent with the surroundings of the searcher. Therefore, a searcher can never rely on the information it is implicitly encoding, but it must constantly re-visit the entire polygon to make sure that the map it is encoding is correct.\footnote{For this reason, there is no distributed algorithm that, for every polygon $P$, allows a team of memoryless searchers to either solve the Meeting problem in $P$ or terminate if the problem is unsolvable in $P$.}

Hence, searchers cannot simply agree on a landmark and sit on it waiting for one another, because that would prevent them from re-visiting $P$. This inconvenience drastically complicates the Meeting problem, and forces the searchers to follow relatively complicated movement patterns that make at least two of them necessarily meet.

There is also a subtle problem with the actual encoding of complex data as the distance from a point, which is a single real number. One could naively pack several real numbers into one by interleaving their digits, but this encoding would not be computable by real random-access machines~\cite{blum}. Hence, we propose a more sophisticated technique, which only requires basic arithmetic operations. Such a technique can substitute the naive one under the reasonable assumption that the vertices of $P$ be points with algebraic coordinates (as expressed in some global coordinate system, which is not necessarily the local one of any searcher).\footnote{A real number is said to be \emph{algebraic} if it is a root of a polynomial with integer coefficients~\cite{libro1}.}

\paragraph{Statement of results.}
We prove that the Meeting problem in a polygon $P$ can be solved by $k=\sigma+1$ searchers, where $\sigma$ is the order of the rotation group of $P$ (which is also called the \emph{symmetricity} of $P$). We also give a matching lower bound, showing that there are polygons of symmetricty $\sigma$ where $\sigma$ searchers cannot solve the Meeting problem.

Then, since all our lower-bound examples are polygons with a hole around the center, we wonder if the Meeting problem can be solved by fewer searchers if we exclude this small class of polygons (i.e., the polygons with a hole around the center).\footnote{Collectively, these polygons constitute a subset of measure~$0$ of the set of all polygons with holes.} Surprisingly, it turns out that in all the remaining polygons only two searchers are sufficient to solve the Meeting problem. In particular, these include all the polygons with no holes.

Additionally, searchers can geometrically construct their destination points with a compass and a straightedge, provided that the vertices of $P$ are algebraic points. Equivalently, searchers only have to compute combinations of basic arithmetic operations and square root extractions on the coordinates of the visible vertices of $P$. This is done via an encoding technique of independent interest, which we apply to mobile robots for the first time.

As a subroutine of our algorithms, we employ a self-stabilizing map construction algorithm that is of independent interest, as well.

\paragraph{Paper summary.}
In Section~\ref{s:2}, we formally define all the elements of the Meeting problem. In Section~\ref{s:3}, we consider the Meeting problem for searchers equipped with an unlimited amount of persistent internal memory whose initial contents can be arbitrary (hence possibly ``incorrect''). This simplification allows us to present ``cleaner'' versions of our algorithms, which are not burdened by the technicalities of our positional encoding method. In Section~\ref{s:4}, we present our encoding technique and we show how to apply it to the Meeting algorithms of Section~\ref{s:3}, thus extending our results to oblivious searchers. Finally, in Section~\ref{s:5} we discuss directions for further research.

\subsection{Related Work}\label{s:1.3}
\paragraph{Static version.}
If searchers are unable to move, then the Meeting problem becomes equivalent to asking what is the maximum number of points that can be placed in a polygon in such a way that no two of them see each other. This is called the \emph{hidden set problem}, and has been studied in~\cite{hidingpeople}, where tight bounds are given in terms of the number of vertices of the polygon. The situation with mobile searchers is of course radically different.

\paragraph{Gathering in the plane.}
The Gathering problem has been extensively studied in several contexts~\cite{libro2,librosantoro}. The literature can be divided into works considering robots in a geometric space, and works considering agents on a graph~\cite{libro2,approach,pelc1,pelc4}.

Here we focus on Gathering in geometric spaces, since it is related to our setting. In the Look-Compute-Move model, asynchronous oblivious robots with unlimited visibility can solve the Gathering problem without additional assumptions~\cite{item1}.

The case of limited visibility has been studied, as well. Any given pattern can be formed by asynchronous robots with agreement on a coordinate system~\cite{gatheringvisibiliy}. Without such agreement, it is still possible to converge to the same point, perhaps without ever reaching it~\cite{limited1}.

Fault-tolerant Gathering has been investigated in~\cite{rendezvous1,self2} for oblivious robots with unlimited visibility. Rendezvous has been investigated also when the robots have a constant amount of ``visible memory''~\cite{rendezvousviglietta}.

All the aforementioned results hold for robots that inhabit an unbounded plane where no extraneous objects can block visibility or movement. In particular, none of these results pertains robots in a polygon.

\paragraph{Rendezvous in a polygon.}
The work that is most relevant to ours is represented by a series of papers on Rendezvous and approximate Rendezvous by two robots in polygons or more general planar enclosures~\cite{pelc2,pelc3,pelc1,pelc5}. The authors show how to guarantee that the two robots' trajectories will intersect (or get arbitrary close to each other in case of approximate rendezvous) within finite time, in spite of a powerful adversary that controls the speed and the movements of the robots on their trajectories. However, termination happens implicitly: the robots are not necessarily aware of each other's presence, and Rendezvous is considered solved even if they are both moving. Moreover, none of these papers considers oblivious robots, and none of them allows the initial memory contents of the robots to be arbitrary. Both are simplifications of the problem, because they allow robots to implicitly agree on a single landmark and just move there.

In~\cite{pelc1,pelc5}, the authors study the feasibility of approximate Rendezvous by two robots with unique ids (i.e., non-anonymous) in any closed path-connected subset of the plane. Robots have unlimited persistent memory and do not agree on a system of coordinates.

In~\cite{pelc2}, they investigate upper and lower bounds on the movements of two robots solving the Rendezvous problem in a polygon. The authors give a wealth of results under different assumptions, but a common handedness (i.e., a common notion of clockwise direction) and unlimited persistent memory are always assumed.

In~\cite{pelc3}, feasibility conditions for Rendezvous and constructive algorithms are given. Here robots have a constant amount of persistent memory (hence they are not oblivious) and a common handedness.

Further works include~\cite{CDDMW13b,CDDMW15}, addressing the weak rendezvous problem in a polygon, where all robots have to attain mutual visibility at the same time. It is shown that robots can agree on a clique in the visibility graph of the polygon, and the weak rendezvous protocol is to reach a vertex of the elected clique and wait. Unfortunately, such an interesting technique cannot be used in our case. Indeed, we are aiming for a self-stabilizing meeting procedure, which implies that a searcher cannot remain forever in a sub-portion of the polygon. This is clear if we consider an initial memory configuration that forces two searchers to remain in different parts of the polygon (e.g, two disjoint cliques of the visibility graph), each searcher wrongly believing to be in the ``elected portion''. Another drawback of~\cite{CDDMW13b,CDDMW15} is the need for an upper bound on the number of vertices of the polygon.

\paragraph{Miscellanea.}
Another problem for robots in polygons is the search for an intruder, e.g.,~\cite{yamashita2,yamashita1}, where one robot tries to escape while others have to locate it. This setting is clearly quite different, as in the Meeting problem we consider robots that cooperate to achieve a common goal.

The model in which robots can obstruct each other's view has also been studied. Here, the goal is typically to make all robots see each other by making sure that no three of them are collinear~\cite{DFGPSV,SBM15,SVTBR16}. As with the literature on Gathering, none of these works considers robots in a polygon.

Recall that a sub-routine of our Meeting algorithms consists in drawing a map of the polygon. A related problem is that of constructing the visibility graph of a polygon by mobile robots: this has been addressed in~\cite{CDDMW13b,CDDMW15,angles}, where great efforts have been devoted to finding minimal assumptions on the robots' power that allow them to solve the problem. In particular, in \cite{angles} the mapping is done without any a-priori knowledge about the polygon bound, and only using the measurements of angles (hence without measuring distances). However, extending this algorithm to the memory-less case is far from trivial.

\paragraph{Computability issues.}
The issue of defining a model for the ``local'' computations of mobile robots has hardly ever been addressed in the relevant literature. It is nevertheless interesting to establish what destination points are computable by mobile robots, and what it means for a robot to compute a point.

To the best of our knowledge, only two papers deal with this problem. In~\cite{item1}, a definition of computation is not explicitly given, but it is said that a certain point is not computable, in the sense that its coordinates cannot be computed with basic arithmetic operations and extractions of roots of any degree. In~\cite{circle}, computation is explicitly defined in terms of algebraic functions, although all that is actually needed is the ability to construct regular polygons, as well as all points that can be geometrically obtained with compass and straightedge.

Interestingly, solutions to geometric problems have been proposed that involve functions whose computability (in an intuitive sense) is unclear. An example is in~\cite{pelc2}, where the Rendezvous point computed by one of the algorithms is either the central vertex or the midpoint of the central segment of the medial axis of a polygon. Since the central segment of the medial axis could be any parabolic arc, its midpoint is a transcendental function of the polygon's vertices. As such, it is not constructible with compass and straightedge~\cite{kaza}. It turns out that this is not a real problem for that particular algorithm, because the robots can easily agree on an endpoint of the aforementioned parabolic arc or on the parabola's vertex, instead of its midpoint. Still, it is interesting to observe that the notion of computability emerging from~\cite{pelc2} is more ``comprehensive'' than the one of~\cite{circle}.

Another contribution of this paper is a formal definition of the concept of computability for mobile robots (see the beginning of Section~\ref{s:4.2}). Accordingly, all of our geometric constructions can be performed with a compass and a straightedge.

\section{Definitions}\label{s:2}

\paragraph{Polygons.}
A \emph{polygon} in the Euclidean plane $\mathbb R^2$ is a non-empty, bounded, connected, and topologically closed $2$-manifold whose boundary is a finite collection of line segments. The \emph{vertices}, \emph{edges}, and \emph{diagonals} of a polygon are defined in the standard way, as well as the notion of \emph{adjacency} between vertices. One connected component of a polygon's boundary, called the \emph{external boundary}, encloses all others, which are called \emph{holes}.

If a polygon has an axis of symmetry, we say that it is \emph{axially symmetric}. If it has a center of symmetry, we say that it is \emph{centrally symmetrc}. The largest integer $\sigma$ such that rotating a polygon around its barycenter by $2\pi/\sigma$ radians leaves it unchanged is called the \emph{symmetricity} of the polygon.\footnote{Here and throughout the paper, we will refer to the barycenter only because it is a well-defined point in every polygon. However, this choice is not essential: equivalently, we could take the center of symmetry if the polygon has one, or any point otherwise.} In other words, the symmetricity is the order of the rotation group of the polygon. If $\sigma>1$, the polygon is said to be \emph{rotationally symmetric}. (Most of these concepts are introduced, for instance, in~\cite{geometry}.)

We say that a point $p\in P$ \emph{sees} a point $q\in P$ (or, equivalently, that $q$ is \emph{visible} to $p$) if the line segment $pq$ lies in $P$. If, in addition, no vertices of $P$ lie in the relative interior of $pq$, then $p$ \emph{fully sees} $q$ (equivalently, $q$ is \emph{fully visible} to $p$), as Figure~\ref{f:a} illustrates.

\begin{figure}[ht]
\centering
\includegraphics[scale=1.25]{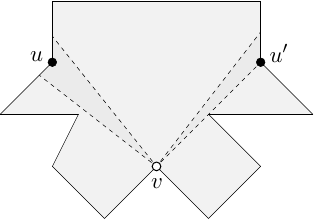}
\caption{$v$ can fully see $u$, and it can see $u'$ but not fully.}\label{f:a}
\end{figure}

\paragraph{Searchers.}
Let $P$ be a polygon. By \emph{$P$-searcher} we mean an anonymous \emph{robot} represented by a point in $P$, which, informally, can \emph{observe} its surroundings and \emph{move} within $P$. If a $P$-searcher is located in $p\in P$, we say that it \emph{sees} all the points of $P$ that are visible to $p$. When $P$ is understood, we will omit it and simply refer to \emph{searchers}.

The \emph{life cycle} of a $P$-searcher consists of three phases, which are repeated forever: \emph{Look}, \emph{Compute}, and \emph{Move}. In a Look phase, the $P$-searcher takes a ``snapshot'' of the subset of $P$ that it currently sees, along with the locations of the $P$-searchers that it currently sees. The snapshot is expressed in the \emph{local reference system} of the observing searcher, which is a Cartesian system of coordinates with the searcher's current location as the origin. In a Compute phase, the searcher executes a deterministic algorithm whose input is the last snapshot taken, and the output is a \emph{destination point}, again expressed in the local reference system of the searcher. In the Move phase, the searcher continuously moves toward the destination point it just computed. Once it gets there, it stops moving and starts a new Look phase, and so on. A searcher's local coordinate system translates as the searcher moves (to keep the searcher's location at the origin), but it retains its orientation, scale, and handedness.\footnote{The fact that a searcher retains its local reference frame's orientation, scale, and handedness is common in most of the related literature. However, in this work we will not strictly need it: in our algorithms, a searcher will always move (close) to a vertex of $P$. Hence, after a move, it will always be able to correctly match its new view with the previous one. This is possible even if its reference frame has reflected after the move: it is sufficient to make the searcher stop close enough to the angle bisector stemming from the destination vertex, but not exactly on it. On the next turn, the searcher will be able to tell its new reference frame's handedness based on whether it is located to the left or the right of the angle bisector (cf.~Section~\ref{s:4.2}).}

When several $P$-searchers are present, we stipulate that they all execute the same algorithm. Furthermore, each searcher's local coordinate system is independent of the others, and has its own orientation, scale, and handedness. Therefore, each $P$-searcher sees a differently scaled, rotated, translated, and possibly reflected version of $P$. Searchers are also \emph{asynchronous}, in the sense that their life cycles are completely independent: each phase of each cycle of each searcher lasts an unpredictably long (but finite) time, which is decided by an adversarial \emph{scheduler}. Also, the speed of a searcher during a Move phase is not necessarily constant. We stress that $P$'s vertices are ``anonymous'', in the sense that they can be distinguished only by their relative positions, and no labels are attached to them.

\paragraph{Meeting.}
We say that two $P$-searchers are \emph{mutually a\-ware} at some point in time if they have seen each other during their most recent Look phases. That is, if searcher $s_1$ sees searcher $s_2$ during a Look phase at time $t_1$, $s_2$ sees $s_1$ during a Look phase at time $t_2\geq t_1$, and neither $s_1$ nor $s_2$ performs another Look phase in the time interval $(t_1,t_2)$, then $s_1$ and $s_2$ are mutually aware at time $t_2$ (and they remain mutually aware until $s_1$ performs a Look phase without seeing $s_2$, or vice versa). A very similar notion of mutual awareness has been defined in~\cite{nearg}.

Given a team of $P$-searchers, the \emph{Meeting problem} prescribes that at least two of them become mutually aware. More precisely, the Meeting problem for $k$ searchers in $P$ is \emph{solvable} if there exists an algorithm $A$ such that, if all $k$ searchers execute $A$ during all their Compute phases, at least two of them eventually become mutually aware, regardless of how the searchers are initially laid out in $P$, and regardless of how the scheduler decides to control their behavior. Occasionally, we will say that two searchers \emph{meet}, as a synonym of becoming mutually aware.

In Section~\ref{s:3}, we are going to assume that each search\-er has an unlimited amount of \emph{persistent internal memory}, which can be read and updated by the searcher during each Compute phase, and is retained for use in later Compute phases. The initial contents of the internal memory of each searcher are arbitrary, and possibly ``incorrect''. In Section~\ref{s:4}, we will drop the persistent memory requirements, and we will extend our algorithms to \emph{oblivious} searchers, whose computations only rely on the single snapshot taken in the most recent Look phase, and whose internal memory is erased during each Move phase.

\section{Algorithms and Correctness}\label{s:3}

In this section, we set out to minimize the number of $P$-searchers that can solve the Meeting problem in a polygon $P$. In Section~\ref{s:3.1}, we provide a tight bound in terms of $P$'s symmetricity, by means of a lower-bound construction (Theorem~\ref{t:1}) and an algorithm which, as a bonus, is independent of $P$ (Theorem~\ref{t:2}). As a tool, we use a self-stabilizing map construction algorithm. In Section~\ref{s:3.2}, we exclude a pathological class of polygons, and we prove that in all remaining polygons the Meeting problem can be solved by just two searchers (which is obviously optimal), again with an algorithm independent of the polygon (Theorem~\ref{t:3}).

We first present our algorithms assuming that searchers have an unlimited amount of persistent internal memory, which initially may contain arbitrary data. Then, in Section~\ref{s:4}, we will extend these algorithms to oblivious searchers.

\subsection{General Algorithm}\label{s:3.1}
\paragraph{Lower bound.}
First we give a lower bound on the minimum number of searchers required to solve the Meeting problem in a polygon. Our bound is in terms of the polygon's symmetricity.

\begin{theorem}\label{t:1}
For every integer $\sigma>0$, there exists a polygon with symmetricity $\sigma$ in which $\sigma$ (or fewer) searchers cannot solve the Meeting problem.
\end{theorem}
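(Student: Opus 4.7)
The case $\sigma=1$ is immediate, since the Meeting problem requires at least two searchers, so a single searcher cannot meet anyone. I therefore focus on $\sigma\geq 2$. The plan is to build a polygon $P_\sigma$ of symmetricity exactly $\sigma$ with the property that, for every point $p\in P_\sigma$, the $\sigma$ rotational copies of $p$ under the symmetry group of $P_\sigma$ are pairwise mutually invisible. A concrete candidate is $\sigma$ congruent \emph{chiral} arms (each with no axial symmetry, and together not invariant under any finer rotation) arranged with $\sigma$-fold rotational symmetry around a sufficiently large central hole. The hole is chosen so that the chord joining any two rotational copies of the same point in $P_\sigma$ must cross it; by elementary geometry this confines the admissible arms to an annular region whose radial extent lies between the hole's inradius $\rho$ and roughly $\rho/\cos(\pi/\sigma)$.

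With $P_\sigma$ in hand, for any $k\leq\sigma$ the adversary proceeds as follows. Place the $k$ searchers at $k$ of the $\sigma$ rotational copies of a common initial point; equip them with local coordinate systems that are rotated images of one another (matching handedness and unit length); and, in the persistent-memory model, initialize all memories identically. A fully synchronous scheduler then activates the searchers in lockstep. By induction on the round I would establish the invariant that, at the beginning of every Look phase, the $k$ searchers occupy rotational copies of a single point, their local frames are rotated copies, and their memories coincide. In the induction step, the invisibility property of $P_\sigma$ ensures that no searcher sees another, so each snapshot reduces to the rotated view of $P_\sigma$ from its own symmetric position; hence all snapshots are identical in each local frame, the common deterministic algorithm produces identical local destination points, and the subsequent Move phases transport the searchers to a new orbit of rotational copies. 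Since no two occupied positions ever see each other, no two searchers ever become mutually aware, proving the lower bound.

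The main obstacle I anticipate is the geometric design of $P_\sigma$. Forcing the symmetricity to be exactly $\sigma$ (rather than a proper multiple) is handled by the chirality of the arms together with a genericity check ruling out invariance under any divisor of the cyclic group of order $\sigma$. The more delicate part is the universal mutual-invisibility property: as $\sigma$ grows, the chord joining two \emph{consecutive} rotational copies at distance $r$ from the center lies at distance $r\cos(\pi/\sigma)$ from the center, which tends to $r$ itself, so the arms must hug the hole. A routine angular estimate on consecutive copies (the worst case, since non-consecutive chords pass closer to the center) suffices to confirm that every cross-orbit chord traverses the hole, completing the construction.
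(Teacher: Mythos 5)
Your proposal is correct and follows essentially the same route as the paper: a rotationally symmetric polygon with a large central hole that blocks visibility between any two points in the same rotational orbit, combined with a synchronous adversary that starts the searchers at symmetric positions with rotated local frames and identical memories, and an induction showing the symmetry is never broken. The paper instantiates the construction as a $\sigma$-pointed star with one large hole almost touching the external boundary, which is the same idea as your annular "arms hugging the hole" design; your extra chirality condition is harmless but unnecessary, since only the rotation group (not the full symmetry group) needs to have order exactly $\sigma$.
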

\begin{proof}
If $\sigma=1$, the statement is trivial. If $\sigma>1$, we construct a polygon with symmetricity $\sigma$ shaped as a $\sigma$-pointed star with one large hole almost touching the external boundary, as shown in Figure~\ref{f:1}. We then arrange $\sigma' \leq \sigma$ searchers and orient their local coordinate systems in a symmetric fashion, as in Figure~\ref{f:1}. Now, let the initial memory contents of all the searchers be equal, and suppose that the scheduler always activates them synchronously. By the rotational symmetry of our construction, each searcher gets an identical snapshot of the polygon, and therefore all searchers compute symmetric destination points and modify their memory in the same way. This holds true at every cycle, and so, by induction, the searchers will always be found at $\sigma'$ symmetric locations throughout the execution. Note that our polygon has the property that no two of its points whose angular distance (with respect to the barycenter) is a multiple of $2\pi/\sigma$ can see each other. Hence, no matter what algorithm the searchers are executing, no two of them will ever be mutually aware.\qed
\end{proof}

\begin{figure*}[ht]
\centering
\includegraphics[scale=1.25]{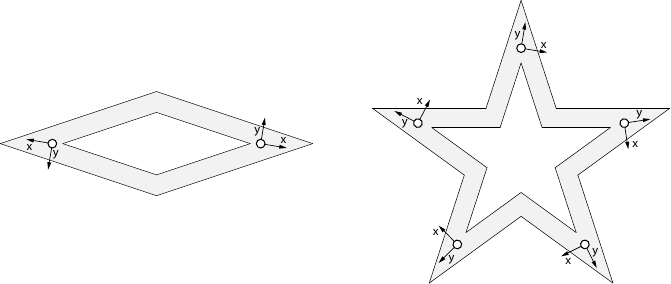}
\caption{Constructions used in Theorem~\ref{t:1} for $\sigma=2$ and $\sigma=5$}\label{f:1}
\end{figure*}

Note that the above theorem holds for searchers with memory, and hence \emph{a fortiori} it holds for oblivious searchers.

Next we will prove that the bound of Theorem~\ref{t:1} is tight, and hence $\sigma+1$ searchers are optimal. Our Algorithm~\ref{f:2} is illustrated below. We have some persistent variables and the procedure Compute, which takes as input the current snapshot, i.e., the part of polygon that is currently visible to the executing searcher plus the searchers it currently sees. This algorithm assumes that searchers have unlimited memory, and hence they can store the entire history of the snapshots they have taken since the beginning of the execution. In Section~\ref{s:4}, we will show how to drop this requirement and apply our algorithms to oblivious searchers.

\begin{algorithm*}
\scriptsize
\begin{framed}
\begin{algorithmic}
\State {\sf Persistent variables}
\State SnapshotList
\State Action
\State Direction
\State Polygon
\State PivotPoint
\\
\State {\sf Procedure Compute }{(Snapshot)}
\If{Snapshot contains no other searcher}
\State Append Snapshot to SnapshotList
\If{SnapshotList is inconsistent \textbf{or} (Action = PATROL \textbf{and} PivotPoint is not consistent with Polygon)}
\State SnapshotList := Snapshot
\State Action := EXPLORE
\EndIf
\If{Action = EXPLORE}
\State Polygon := Extract (partial) polygon from SnapshotList
\State $U$ := Unvisited vertices of Polygon
\If{$U\neq\varnothing$}
\State $v$ := First vertex of $U$
\State Compute a shortest path to $v$ within Polygon, and move to the last visible point along this path
\Else
\State Action := PATROL
\State Direction := CLOCKWISE
\State $S$ := Set of axes of symmetry of Polygon
\If{$S=\varnothing$}
\State $C$ := Select a rotation class of vertices of Polygon in a similarity-invariant way
\State PivotPoint := Select any vertex in $C$
\Else
\State $S'$ := Select a class of equivalent axes in $S$ in a similarity-invariant way
\State $\ell$ := Select any axis in $S'$
\State $C$ := Select a class of equivalent points of $\ell$ on the boundary of Polygon in a similarity-invariant way
\State PivotPoint := Select any point in $C$
\EndIf
\State Augment Polygon using PivotPoint as pivot in a similarity-invariant way to make it simply connected
\EndIf
\EndIf
\If{Action = PATROL}
\If{I am in PivotPoint}
\State Invert Direction
\EndIf
\State Move to the next vertex of Polygon, following its boundary in the direction stored in variable Direction
\EndIf
\EndIf
\end{algorithmic}
\end{framed}
\caption{Meeting algorithm for general polygons\label{f:2}}
\end{algorithm*}

\paragraph{Checking for other searchers.}
By definition, the Meeting problem is solved when two searchers become mutually aware. So, in our algorithm, whenever a searcher $s_1$ sees another searcher $s_2$, it stays idle for a cycle and waits to be noticed by $s_2$. Then, if $s_1$ no longer sees $s_2$, it realizes that they are not mutually aware, and resumes the algorithm (this may happen if $s_2$ is in the middle of a Move phase when it is seen by $s_1$, and it goes through an area that is invisible to $s_1$ before performing its next Look phase). Otherwise, $s_1$ and $s_2$ become mutually aware, and the Meeting problem is solved.

\paragraph{Checking for incongruities.}
Let $P$ be the polygon in which the searchers are located. Since the initial memory contents may be incorrect, if a searcher notices a discrepancy between the current snapshot of $P$ and the history of snapshots stored in memory, it forgets everything and restarts the execution from wherever it is. Note that a searcher can always reconstruct all its previous movements within $P$ by looking at the history of snapshots and ``simulating'' procedure Compute on all of them. Therefore, when the searcher ``believes'' to be re-visiting some region of $P$, it can compare the new snapshot with the old ones taken from the same region, and is able to tell if something looks different. If this is the case, it must be because its initial memory contents were ``corrupt'', and hence it overwrites everything with the current snapshot.

\paragraph{Exploring the polygon.}
We observe that each searcher must keep re-visiting every part of the boundary of $P$. Indeed, if it stops visiting some parts of the boundary, it can never be sure that the shape of $P$ is actually the one it has in memory, and it is easy then to prove that the algorithm cannot solve the Meeting problem (revisiting the boundary of $P$ is what makes the map construction subroutine self-stabilizing).

Our main algorithm is divided into two phases: EXPLORE and PATROL. Roughly speaking, in the EXPLORE phase, a searcher visits all vertices of $P$; in the PATROL phase, it moves back and forth along the boundary of $P$, searching for a companion. The EXPLORE phase is relatively simple: as the searcher explores new vertices, it keeps track of the ones that it has seen but not visited. Then it picks the first of such vertices and moves to it along a shortest path. Since the searcher may not have a complete picture of $P$ yet, by ``shortest path'' we mean a shortest path in the portion of $P$ that has been recorded in memory thus far.

\paragraph{Selecting the pivot point.}
For the PATROL phase, the searcher must first choose a \emph{pivot point} of $P$, which is the point where the searcher changes direction as it patrols $P$'s boundary. It also has to cope with the fact that the boundary of $P$ may not be connected, since $P$ may have holes. The pivot point is chosen in a different way depending if $P$ is axially symmetric or not.

Let $n$ be the number of vertices of $P$, and let $\sigma$ be its symmetricity. Suppose first that $P$ is not axially symmetric. Then, the orbit of each vertex under the rotation group of $P$ has size exactly $\sigma$, and therefore there are $n/\sigma$ different orbits (or \emph{rotation classes}) of vertices. The searcher will pick one rotation class of vertices in a \emph{similarity-invariant} way. This means that the selection algorithm should not depend on the scale, rotation, position, and handedness of $P$, but it should be a deterministic algorithm that only looks at angles between vertices and ratios between segment lengths.\footnote{As an example, we show how to do it when $P$ has no holes. Extending this method to the general case is just slightly more complicated, but the principles are the same. Pick the (unique) circle of smallest radius that contains all the vertices of $P$, and let $r$ be its radius. Name the vertices of $P$ $v_0$, $v_1$, \dots, $v_{n-1}$ in clockwise order. Pick any vertex $v_i$, and construct the right-handed coordinate system having origin in $v_i$, unit $r$, and $x$ axis oriented like $\overrightarrow{v_iv_{i+1}}$ (indices are always taken modulo $n$). Give a representation of $P$ in this coordinate system, i.e., the ordered list of the $x$ and $y$ coordinates of $v_{i+1}$, \dots, $v_n$, $v_1$, \dots, $v_{i-1}$. Then construct another representation in the same coordinate system, but taking the vertices in the reverse order, i.e., $v_{i-1}$, \dots, $v_1$, $v_n$, \dots, $v_{i+1}$. Pick the lexicographically smaller of these two representations (if they are equal, pick any of them), and call it $R_i$. Repeating the same construction with all the $v_i$'s yields the representations $R_0$, $R_1$, \dots, $R_{n-1}$: let $R_m$ be the lexicographically smallest among them. Now, pick all vertices $v_i$ such that $R_i=R_m$: these constitute a rotation class of $P$ chosen in a similarity-invariant way. Indeed, no matter how we rotate, translate, uniformly scale by a non-zero factor, or reflect $P$, we will always pick the same set of vertices.} This is to guarantee that all searchers that have a correct picture of $P$ in memory (expressed in their respective local coordinates systems) will select the same class of vertices. Once this rotation class has been selected, the searcher picks any of its elements as the pivot point.

Suppose now that $P$ is axially symmetric: hence it has $\sigma$ distinct axes of symmetry. If $\sigma$ is odd, all axes of symmetry are rotationally equivalent (i.e., for any two axes of symmetry, there is a rotation of the plane that maps one into the other and leaves the polygon unchanged); if $\sigma$ is even, there are two distinct classes of rotationally equivalent axes of symmetry, each of size $\sigma/2$. (For instance, the pentagonal star in Figure~\ref{f:1} has five equivalent axes of symmetry, while the polygon in Figure~\ref{f:3} has symmetricity 4 and two pairs of equivalent axes of symmetry.) The searcher will select a class of axes of symmetry in a similarity-invariant way, and then pick any axis $\ell$ in this class. The pivot point will then be a point in the intersection between $\ell$ and the boundary of $P$. If $\sigma$ is odd, all such points are distinguishable, so any one of them is chosen by the searcher in a similarity-invariant way. If $\sigma$ is even, these points come in symmetric pairs along $\ell$ (see Figure~\ref{f:3}). In this case, one such pair is selected in a similarity-invariant way, and then any point in the pair is picked by the searcher as the pivot point. Note that the pivot point is either a vertex of $P$ or the midpoint of an edge.

\paragraph{Augmenting the polygon.}
Once a searcher has selected a pivot point, it adds some ``artificial'' edges to $P$ in order to make it simply connected, i.e., remove all its holes. This may be impossible to do in a similarity-invariant way (for instance, in the polygons of Figure~\ref{f:1} it is impossible), so the pivot point will be used to determine how symmetries are broken. Also, we will make sure that no such artificial edges are incident to the pivot point.

If $P$ is not axially symmetric, then an orientation (i.e., clockwise or counterclockwise) can be chosen in a similarity-invariant way. Given the pivot point and an orientation, then breaking symmetries is trivial. In order to remove a hole, we draw a diagonal of $P$ that connects two different connected components of the boundary (i.e., two different holes or a hole and the external boundary). The diagonal is selected in a deterministic way, and should not be incident to the pivot point. Cutting $P$ along such a diagonal merges two connected components of its boundary, reducing the number of holes by one. This procedure is repeated until the boundary is connected.

Suppose now that $P$ is axially symmetric, and let $\ell$ be the axis of symmetry containing the pivot point. We will augment $P$ while keeping it symmetric with respect to $\ell$. If a hole of $P$ intersects $\ell$, we connect it to a neighboring hole or to the external boundary of $P$ in a deterministic way, by drawing a sub-segment of $\ell$ not incident to the pivot point. If a hole $H$ of $P$ does not intersect $\ell$, it must have a symmetric hole $H'$ on the other side of $\ell$. Then we draw a diagonal (again, in a deterministic way) not intersecting $\ell$ to connect $H$ to another hole or to the external boundary. We also draw the symmetric diagonal to connect $H'$. Since the two diagonals do not intersect each other (or they would intersect $\ell$), cutting $P$ along them does not disconnect it. Figure~\ref{f:3} shows an example of how such diagonals can be chosen in a symmetric polygon (in this example, $\ell$ is the vertical axis).

\begin{figure*}[ht]
\centering
\includegraphics[scale=1]{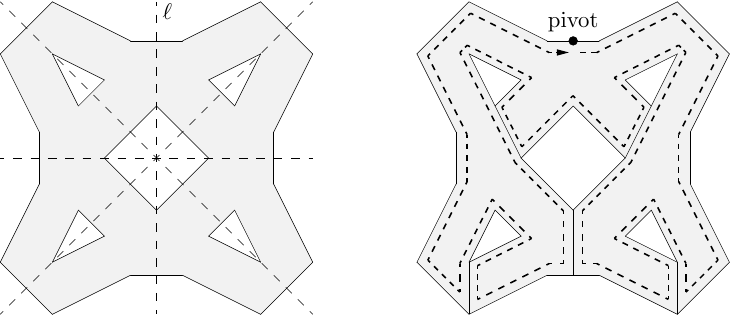}
\caption{Augmenting an axially symmetric polygon and defining a tour of its boundary}\label{f:3}
\end{figure*}

\paragraph{Patrolling the polygon.}
In the previous paragraphs, we described how to select a finite set of segments in $P$: let $D$ be the union of these segments. As a result of cutting $P$ along such segments, we obtain a \emph{degenerate} simply connected polygon $\widetilde{P}=P\setminus D$. By ``degenerate'' we mean that its boundary is no longer the boundary of a topologically closed 2-manifold. However, it is possible to perform a \emph{tour} of the boundary of $\widetilde{P}$, by walking along the external boundary of $P$, and then taking a detour along a segment of $D$ and around a hole of $P$, as soon as one is found. The resulting tour can be clockwise or counterclockwise, and traverses each edge of $P$ once and each segment of $D$ twice (once in each direction). One such tour is illustrated in Figure~\ref{f:3}. Intuitively, this would correspond to slightly ``thickening'' each segment of $D$, subtracting $D$ from $P$, and walking around the boundary of the resulting (non-degenerate) polygon.

The PATROL phase of our algorithm consists in taking a tour of $\widetilde P$ and switching direction (from clockwise to counterclockwise and vice versa) every time the pivot point is reached. So, all vertices of $P$ are perpetually visited in some fixed order, then in the opposite order, and so on. At any time, the searcher can always determine its next destination point based on the history of snapshots stored in memory.

\paragraph{Correctness of Algorithm~\ref{f:2}.}
We will now prove the correctness of this algorithm.

\begin{theorem}\label{t:2}
There is an algorithm that, for every integer $\sigma>0$, solves the Meeting problem with $\sigma+1$ searchers (regardless of their initial memory contents) in every polygon with symmetricity $\sigma$.
\end{theorem}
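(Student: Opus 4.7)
The plan is to split the argument into three stages: self-stabilization to a correct map of $P$, a pigeonhole step forcing two searchers to share a pivot point, and a scheduling argument showing that those two searchers must become mutually aware. For self-stabilization, I first show that each searcher eventually enters PATROL holding a correct map of $P$. Because the searcher can always replay procedure Compute on its stored SnapshotList and predict what it should be seeing now, any mismatch between the prediction and the real snapshot, or between the stored PivotPoint and the stored Polygon, is detected and triggers a full reset to the current snapshot. Subsequent EXPLORE iterations then operate only on self-generated, truthful snapshots taken after the reset, so the SnapshotList can only grow into a correct partial map. Since EXPLORE always pursues an unvisited vertex along a shortest known path, it terminates in finitely many cycles with every vertex of $P$ visited, at which point the map is fully accurate and the searcher transitions to PATROL; the perpetual revisiting of the boundary during PATROL guarantees that no later reset is triggered.

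For the pigeonhole step, I use that the pivot-point selection is similarity-invariant, so any two searchers holding a correct map of $P$ settle on pivots drawn from a common equivalence class. A short case distinction (axially symmetric versus not, $\sigma$ even versus odd) shows that this class always has exactly $\sigma$ elements. With $\sigma+1$ searchers, at least two of them, call them $s_1$ and $s_2$, must therefore choose the same pivot point, and hence also the same augmented polygon $\widetilde P$ and the same boundary tour $v_0, v_1, \ldots, v_{n-1}, v_0, v_{n-1}, \ldots, v_1, v_0, \ldots$, both reversing at the shared pivot $v_0$.

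The main obstacle is showing that $s_1$ and $s_2$ must become mutually aware despite the adversarial asynchronous scheduler. I would ``unfold'' the oscillation and identify each searcher's discrete trajectory with monotone progress around a cyclic sequence of $2n$ symbols, so that the two trajectories are essentially two points traveling around the same cycle. The key step is then a scheduling argument showing that, however the adversary interleaves the Look-Compute-Move cycles of the two searchers, there must eventually arise a configuration in which both are simultaneously located at the same vertex or at adjacent vertices (so that the edge between them, which lies in $P$, guarantees mutual visibility) and in which a Look phase of one overlaps the stationary phase of the other. The ``idle on sight'' rule is essential here: once one searcher sees the other, it freezes until the other has had a chance to Look, producing timestamps $t_1 \leq t_2$ as required by the definition of mutual awareness and completing the proof.
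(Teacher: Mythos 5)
Your decomposition (self-stabilizing map construction, pigeonhole on a size-$\sigma$ class of candidate pivots, then a crossing argument for the two searchers sharing a pivot) matches the paper's, and your first two stages are essentially sound; your direct pigeonhole on the $\sigma$ candidate pivot points is even a bit cleaner than the paper's two-step argument for even $\sigma$. The gap is in the third stage, which is where the real content of the theorem lies. You reduce the situation to ``two points traveling around the same cycle of $2n$ symbols,'' each making monotone progress, and then assert that some scheduling argument forces them to be simultaneously at the same or adjacent vertices. As stated, this is false: two tokens advancing monotonically in the same direction around a common cycle can be kept at cyclic distance $n$ apart forever by the scheduler (this is essentially why the counterexample of Figure~\ref{f:c} works when the pivots differ). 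What saves the argument is that the abstract $2n$-cycle is a \emph{folding} of the physical boundary walk: positions $i$ and $2n-i$ of the unfolded oscillation are the same physical vertex, so a searcher in its ``forward'' half and one in its ``backward'' half are physically moving toward each other. Concretely, lifting the two phases to continuous, non-decreasing, unbounded reals $\phi_1,\phi_2$, the searchers occupy the same physical point whenever $\phi_1+\phi_2\equiv 0\pmod{2n}$ (or $\phi_1-\phi_2\equiv 0$, depending on handedness), and this sum must pass through a multiple of $2n$ within one full oscillation of either searcher; at that instant they are traversing a common edge of $\widetilde P$ in opposite directions, and the idle-on-sight rule converts the crossing into mutual awareness. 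This reversal-at-the-common-pivot argument is the paper's ``they will have to traverse the same edge $e$ in opposite directions at the same time''; your proposal names the conclusion but supplies no mechanism for it, and the framing you chose does not yield it.

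Two smaller issues. First, your self-stabilization paragraph inverts the logic: a reset is not guaranteed to occur, since the arbitrary initial memory may be locally consistent with everything seen during EXPLORE, so a searcher can enter PATROL with a wrong map; the role of perpetually revisiting every vertex in PATROL is precisely to guarantee that a wrong map \emph{does} eventually trigger a (final) reset, after which the map is correct forever. Second, the idle-on-sight rule can in principle freeze a searcher indefinitely (it keeps seeing a searcher that never looks back), so your claim that every searcher finishes EXPLORE and patrols forever needs the paper's extra observation: if some searcher is stuck this way, some other searcher still makes progress and becomes mutually aware with the stuck one within one full tour.
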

\begin{proof}
We will show that Algorithm~\ref{f:2} correctly solves the Meeting problem for $\sigma+1$ searchers in any polygon $P$ with symmetricity $\sigma$. We have to show that, as the searchers execute the algorithm (asynchronously), at least two of them will eventually become mutually aware, regardless of the initial memory contents of the searchers and their initial locations.

Since the initial memory contents of a searcher may be incorrect, when a searcher notices a discrepancy between the current observation and a previous observation, it erases its own memory and restarts the execution. The same happens if it realizes that the pivot point it has chosen does not match the polygon. From that point onward, the searcher's memory will only contain correct information, and the execution will never be restarted again. Hence, in the following, we will assume that no such discrepancy is ever discovered, and therefore the execution is never restarted.

The EXPLORE phase relies on the connectedness of the \emph{visibility graph} of $P$. Recall that the visibility graph of $P$ is the graph on the set of vertices of $P$ whose edges are the edges and diagonals of $P$. This graph is connected because from any vertex to any other vertex there is a shortest path that is a polygonal chain turning only at (reflex) vertices of $P$. So, as the searcher walks through the visibility graph, it maintains a list of vertices that have been discovered but not visited. It then walks to the first of these vertices along a shortest path while updating the list, and so on. Note that the ``shortest path'' may change as new vertices are discovered. However, this can only happen finitely many times, and eventually the target vertex is indeed reached. So, the list of discovered but unvisited vertices will eventually be depleted. By the connectedness of the visibility graph, this happens if and only if all vertices have been visited. This means that eventually the searcher will have a complete representation $P'$ of the polygon $P$. Recall that $P$ and $P'$ may not be the same polygon, because the searcher may have an arbitrary list of snapshots initially in memory, which may be coherent with the current snapshot.

Now that the searcher has a representation $P'$ of $P$, it makes its boundary connected by choosing a pivot point and adding some extra segments, and then starts the PATROL phase. Observe that the pivot point and the extra segments remain fixed as the searcher moves, since they have been stored in the persistent memory. In the PATROL phase, the searcher will repeatedly attempt to visit every vertex of $P'$. So, if $P\neq P'$, the searcher will eventually find out: if some vertices of $P'$ are not vertices of $P$ or if $P$ has some extra vertices, the searcher is bound to see the discrepancy, again due to the connectedness of the visibility graph. But this contradicts our assumptions, hence we may as well assume that $P=P'$.

We can therefore assume without loss of generality that, at some point, all searchers are in the PATROL phase, they all have a correct representation of $P$ in memory, and they have correctly computed a pivot point and correctly augmented $P$ to make it simply connected. Suppose that $P$ is not axially symmetric. Since the rotation class of vertices to which the pivot point belongs is chosen in a similarity-invariant way by all searchers, they all have picked the same class. Hence there are only $\sigma$ possible choices for the pivot point, and two searchers must have picked the same, by the pigeonhole principle. Suppose now that $P$ is axially symmetric, and hence it has $\sigma$ axes of symmetry. If $\sigma$ is odd, two searchers must have picked the same axis of symmetry, say $\ell$. These two searchers have then identified a pivot point on $\ell$ in a similarity-invariant way, and therefore they have picked the same point. If $\sigma$ is even, there are two classes of equivalent axes, each of size $\sigma/2$. All searchers have picked an axis from the same class, and hence three searchers must have picked the same axis, say $\ell$, by the pigeonhole principle. Then, each of these three searchers has chosen one of two equivalent points of $\ell$, and therefore two searchers have chosen the same point.

In any case, there are two searchers $s_1$ and $s_2$ that have the same pivot point. These two searchers will also compute the same augmented polygon $\widetilde{P}$, because this is done in a similarity-invariant way (even if $P$ is axially symmetric and $s_1$ and $s_2$ do not have the same notion of clockwise direction). So, both searchers will perform a clockwise tour of the boundary of $\widetilde{P}$, touching all of its vertices is some fixed order, followed by a counterclockwise tour, touching all vertices in the opposite order, and so on. Since they both turn around at the same pivot point, they do the same tour. As a consequence, by the time one of them has completed a full tour, they will have to traverse the same edge $e$ of $\widetilde P$ in opposite directions at the same time. So, they will become mutually aware when reaching the endpoints of $e$, solving the Meeting problem. (As a special case, they may reach the same vertex of $\widetilde P$ at the same time, and then they immediately become mutually aware.)

There is one last detail to consider. Recall that a searcher $s_1$ remains idle for a cycle whenever it sees another searcher $s_2$, even if $s_2$ is not going to notice $s_1$. This may happen, for instance, if $s_2$ is traveling between two points that cannot see $s_1$'s location. If this situation keeps repeating every time $s_1$ takes a snapshot, then $s_1$ is stuck forever, unable to explore or patrol the polygon, and perhaps unable to ever become mutually visible with any other searcher. However, not all searchers can remain stuck in the aforementioned way without at least two of them being mutually aware. Hence, even if $s_1$ is stuck forever and the Meeting problem is not solved yet, at least one searcher necessarily makes steady progress in the algorithm, becoming mutually aware with $s_1$ by the time it completes a full tour of the polygon.\qed
\end{proof}

We emphasize that, if a searcher were tasked to construct a map of $P$, it could do so by simply executing the above algorithm indefinitely (i.e., ignoring the presence of other searchers). Since the algorithm eventually discovers and corrects any possible inconsistency in the initial memory state of the searcher, it is self-stabilizing.

\paragraph{The importance of exploring holes.}
The reader may wonder why we chose to include the holes as part of the tour of the boundary of $P$ that the searchers perform in the PATROL phase. Indeed, the searchers could easily identify the external boundary of $P$ (by computing the sign of its total curvature), so it would be tempting to let them patrol only that part of the boundary, ignoring the holes. This, however, may not work if the initial memory contents of the searchers are incorrect. Say $P$ is not rotationally symmetric, but suppose that it looks rotationally symmetric from the external boundary. This may be because it has a small irregular central hole that is hidden from the external boundary by other holes, while everything else is rotationally symmetric, as in Figure~\ref{f:b}. Since $P$ is not rotationally symmetric, two $P$-searchers should be able to select the same pivot point, and hence meet as they patrol the external boundary. However, their internal representation of $P$ may be incorrect, and show a polygon $P'$ that is rotationally symmetric and coincides with $P$ as seen from the external boundary. So, the searchers may actually choose different pivot points and never notice any discrepancy between $P$ and $P'$ as they patrol the external boundary. But then, they may fail to meet if they occupy symmetric locations and the scheduler keeps activating them synchronously, as explained in Theorem~\ref{t:1}.

\begin{figure}[ht]
\centering
\includegraphics[width=\linewidth]{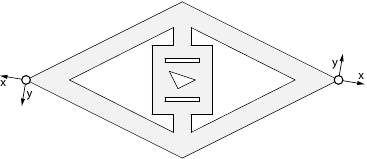}
\caption{The polygon has symmetricity $1$, but its symmetricity looks $2$ if it is observed from the external boundary. The searchers cannot meet if they do not explore the holes.}\label{f:b}
\end{figure}

\subsection{Improved Algorithm for Polygons with Barycenter not in a Hole}\label{s:3.2}

Recall that the worst-case examples given in Theorem~\ref{t:1} are polygons with a hole around the barycenter. It is natural to wonder if the Meeting problem can be solved with fewer searchers if we exclude this special type of polygons. It turns out that in all other cases Algorithm~\ref{f:2} can be drastically improved: only two searchers are needed whenever the polygon's barycenter is not in a hole. Notably, this includes all polygons with no holes.

\paragraph{Counterexample.}
Observe that simply making the searchers patrol the boundary of the polygon as in the previous algorithm may not solve the Meeting problem, even if the polygon has no holes. For instance, assume that the polygon has symmetricity $4$ and has a central region with four equal branches, shaped in such a way that a searcher that is far enough inside a branch cannot see any of the central region, as in Figure~\ref{f:c}. Suppose that two searchers are patrolling this polygon, and they have different pivot points. Then, the scheduler can always keep them in different branches of the polygon and make them move symmetrically within their respective branches (recall that they are executing the same deterministic algorithm). When they have to move to the next branch, the scheduler will make one searcher quickly move to the central region and into the new branch while the other searcher remains hidden inside its own branch. Then the scheduler will make the second searcher move through the central region while the first one is hidden. This way, the searchers will never meet.

\begin{figure*}[ht]
\centering
\includegraphics[scale=1]{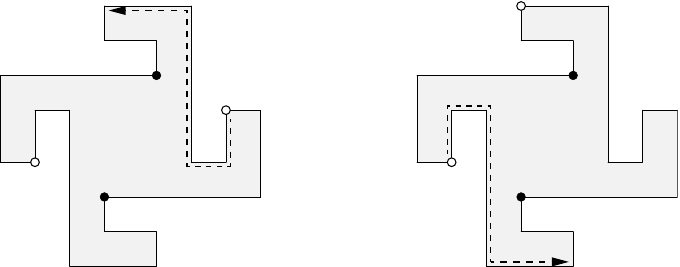}
\caption{If two searchers patrol the boundary with different pivot points, they may never meet.}\label{f:c}
\end{figure*}

Our improved Meeting algorithm is given below as Algorithm~\ref{f:4}. It begins by testing for the presence of another searcher, followed by some consistency tests, and an EXPLORE phase, which are essentially the same as in the previous algorithm. It then proceeds with a PATROL phase, which is more complex than the old one. Note that Algorithm~\ref{f:2} already solves the Meeting problem with two searchers if the polygon is not rotationally symmetric (i.e., for $\sigma=1$). So, in this special case, our improved algorithm works exactly as the previous one. In the following, we will therefore assume that the polygon is rotationally symmetric, and we will discuss only the new PATROL phase.

\begin{algorithm*}
\scriptsize
\begin{framed}
\begin{algorithmic}
\State {\sf Persistent variables}
\State SnapshotList
\State Action
\State Stage
\State Polygon
\State PivotVertex
\State PolygonTriangles
\State PolygonLevels
\\
\State {\sf Procedure Compute }{(Snapshot)}
\If{Snapshot contains no other searcher}
\State Append Snapshot to SnapshotList
\If{Persistent variables are inconsistent}
\State SnapshotList := Snapshot
\State Action := EXPLORE
\EndIf
\If{Action = EXPLORE}
\State Polygon := Extract (partial) polygon from SnapshotList
\State $U$ := Unvisited vertices of Polygon
\If{$U\neq\varnothing$}
\State $v$ := First vertex of $U$
\State Compute a shortest path to $v$ within Polygon, and move to the last visible point along this path
\Else
\State Action := PATROL
\State Stage := $-1$
\If{Polygon is rotationally symmetric}
\State $C$ := Select a similarity class of vertices of Polygon closest to the center in a similarity-invariant way
\State PivotVertex := Select any vertex in $C$
\State Augment Polygon in a similarity-invariant way to make it simply connected
\State Triangulate each branch of augmented Polygon in a similarity-invariant way
\State PolygonTriangles := Total number of triangles in the triangulation of augmented Polygon
\State PolygonLevels := Height of the dual tree of the triangulation of each branch of augmented Polygon
\Else
\State PivotVertex := Select a vertex of Polygon in a similarity-invariant way
\EndIf
\EndIf
\EndIf
\If{Action = PATROL}
\If{Polygon is rotationally symmetric}
\If{I am in PivotVertex}
\State Stage := $\rm{Stage}+1$
\If{Stage $\geq 2\cdot\rm{PolygonLevels} + 2\cdot\rm{PolygonTriangles}^2$}
\State Stage := $0$
\EndIf
\EndIf
\If{Stage = $-1$}
\State Move to the next vertex in a shortest path to PivotVertex
\ElsIf{Stage $<$ PolygonLevels}
\State $j$ := Stage
\State Move to the next vertex of a clockwise $j$-tour of Polygon
\Else
\State $j$ := $2\cdot\rm{PolygonLevels} + 2\cdot\rm{PolygonTriangles}^2 - \rm{Stage}$
\If{$j>$ PolygonLevels}
\State $j$ := PolygonLevels
\EndIf
\State Move to the next vertex of a counterclockwise $j$-tour of Polygon
\EndIf
\Else
\If{I am in PivotVertex}
\State Stage := $\rm{Stage}+1$
\EndIf
\If{Stage is odd}
\State Move to the next vertex of Polygon, following its boundary in the clockwise direction
\Else
\State Move to the next vertex of Polygon, following its boundary in the counterclockwise direction
\EndIf
\EndIf
\EndIf
\EndIf
\end{algorithmic}
\end{framed}
\caption{Improved Meeting algorithm for polygons with barycenter not in a hole\label{f:4}}
\end{algorithm*}

\paragraph{Selecting the pivot vertex.}
Let $P$ be the polygon in which the two searchers operate. Upon ending the EXPLORE phase, a searcher does some pre-processing on the polygon. First it picks a pivot vertex of $P$. To do so, it selects a \emph{similarity class} of vertices $C$ that are closest to the center of the polygon in a similarity-invariant way. A similarity class is a set of vertices that are equivalent up to similarity. This means that both searchers will select the same class of vertices $C$ (assuming they have a correct picture of $P$ in memory). If the symmetricity of $P$ is $\sigma$, then $C$ has size either $\sigma$ or $2\sigma$: indeed, the points of $C$ must be either the vertices of a regular $\sigma$-gon or of two rotated copies of a regular $\sigma$-gon. Each searcher then arbitrarily picks a pivot vertex in this class and stores it in its persistent memory.

\paragraph{Augmenting the polygon.}
The next step is to augment $P$ with some extra edges. Note that the vertices of $C$ (as defined in the previous paragraph) form an equiangular polygon $Q$ around the center of $P$ (a polygon is \emph{equiangular} if all its internal angles are equal). In particular, $Q$ is convex. Since the center of $P$ is not in a hole, $Q$ is completely contained in $P$, i.e., it intersects the boundary of $P$ only at the vertices. We call each of the connected components of $P\setminus Q$ a \emph{branch} of $P$. For each axis of symmetry $\ell$ of a branch that is also an axis of symmetry of $P$, we cut the branch along $\ell$. This operation may merge different connected components of the boundary of $P$, reducing the number of its holes. However, it is easy to see that it cannot disconnect $P$, because we cut only along axes of symmetry, and we leave the central area $Q$ uncut.

If some holes are remaining in the branches, we resolve them by further cutting $P\setminus Q$ along some segments, chosen in a similarity-invariant way, whose endpoints are collinear with the center of $P$. We do so without disconnecting any branch. Note that, since these segments are ``radial'', they cannot intersect each other or the axes of $P$.

The resulting degenerate polygon $\widetilde P$ has simply connected interior and has the same axes of symmetry and the same symmetricity as $P$. Moreover, any searcher performing the above operations on $P$ obtains the same $\widetilde P$, because everything is computed in a similarity-in\-var\-iant way.

\paragraph{Triangulating the branches.}
Each connected component of $\widetilde P\setminus Q$ is called a \emph{sub-branch} of $P$. So, each branch either coincides with a sub-branch or is divided by an axis of symmetry of $P$ into two \emph{twin} sub-branches. As a final pre-processing step, each sub-branch of $P$ is triangulated in a similarity-invariant way. This means, in particular, that twin sub-branches are triangulated in symmetric ways. The central polygon $Q$ is not triangulated.

The dual graph of the triangulation of each sub-branch is a tree. If we add a root node corresponding to $Q$ and we attach all these trees to it, we obtain a rooted tree that is the dual of the entire partition of $\widetilde P$. We denote the height of this rooted tree by $m$.

Figure~\ref{f:5} shows the result of the above operations on an axially symmetric and centrally symmetric. polygon with holes. In this example, the symmetricity is $4$, the branches are four, the sub-branches are eight, and $m=8$.

\begin{figure*}[ht]
\centering
\includegraphics[scale=1]{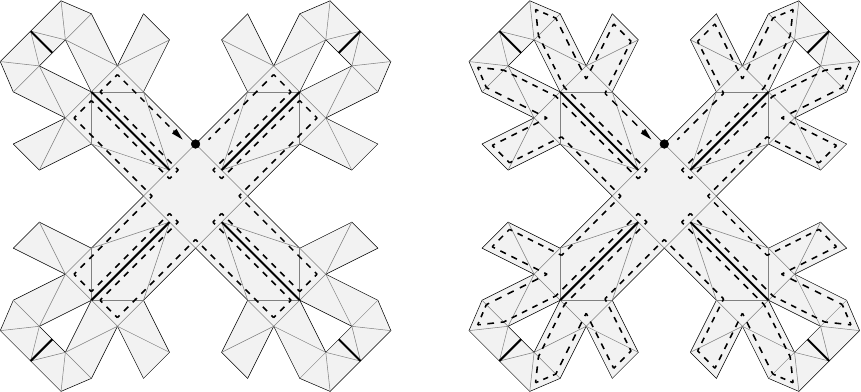}
\caption{Augmented and triangulated axially symmetric polygon with a $3$-tour and a $6$-tour. Solid thick segments represent the cuts that are made to augment the polygon.}\label{f:5}
\end{figure*}

\paragraph{Patrolling the polygon.}
Once $P$ has been augmented and its sub-branches have been triangulated, the PATROL phase starts. This phase has a ``primitive'' operation called \emph{$j$-tour}, where $j$ is an integer between $0$ and $m$. Let $P_j$ be the union of $Q$ and the triangles of the triangulation whose corresponding nodes of the dual graph have depth at most $j$ (with respect to the root corresponding to $Q$). So, for instance, $P_0=Q$ and $P_m=P$. A $j$-tour is a tour of the boundary of $P_j$, starting and ending at the pivot vertex, following the edges of $\widetilde P$. For example, a $0$-tour is simply a tour of the boundary of $Q$, an $m$-tour is a tour of the boundary of $\widetilde P$ (much like the tours of Algorithm~\ref{f:2}), and Figure~\ref{f:5} shows a $3$-tour and a $6$-tour. Obviously, a searcher can perform a $j$-tour in two different directions: clockwise or counterclockwise. In the following, when we say ``clockwise'' and ``counterclockwise'', we mean it in the local reference system of the searcher executing the algorithm.

The PATROL phase consists of several \emph{stages}, and in each stage the searcher performs a $j$-tour, for some $j$. The $j$-tours are performed according to the following list, which is repeated until the Meeting problem is solved:
\begin{itemize}
\item a clockwise $0$-tour,
\item a clockwise $1$-tour,
\item a clockwise $2$-tour,
\item \dots
\item a clockwise $(m-1)$-tour,
\item a sufficiently large number of counterclockwise $m$-tours (twice the square of the total number of triangles in the triangulations of all the sub-branches of $P$ is abundantly enough),
\item a counterclockwise $(m-1)$-tour,
\item a counterclockwise $(m-2)$-tour,
\item \dots
\item a counterclockwise $1$-tour.
\end{itemize}
The first $m$ stages, where the searcher performs clockwise $j$-tours, are called \emph{ascending stages}. All the other stages are called \emph{descending stages}. Moreover, the first stage is called the \emph{central stage}, and the stages in which an $m$-tour is performed are called \emph{perimeter stages}. So, the central stage is an ascending stage, and the perimeter stages are descending stages.

Recall that two different searchers executing the algorithm may not have the same notion of clockwise direction, and therefore in their respective ascending stages they may actually perform tours in opposite directions. If two searchers have the same notion of clockwise direction, they are said to be \emph{concordant}; otherwise, they are \emph{discordant}.

\paragraph{Correctness of Algorithm~\ref{f:4}.}
We can now proceed with the proof of correctness of this algorithm.

\begin{remark}
Similar to the algorithm of Section~\ref{s:3.1}, this one also makes a searcher stop when it sees the other searcher. However, this cannot cause one of them to remain stopped indefinitely without being seen by the other searcher, as already explained in the last paragraph of the proof of Theorem~\ref{t:2}. Therefore, for brevity, in the following proofs we will omit to mention this aspect.
\end{remark}

\begin{lemma}\label{l:1}
Let two $P$-searchers be executing Algorithm~\ref{f:4}, let both be in the PATROL phase, and let both have a correct representation of the polygon $P$ in memory, which is rotationally symmetric. Then, the searchers will either become mutually aware or be in a perimeter stage at the same time.
\end{lemma}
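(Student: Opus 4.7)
The plan is to argue by contradiction: assume $s_1$ and $s_2$ execute the PATROL phase forever with correct maps, never become mutually aware, and are never simultaneously in a perimeter stage. I would show that the last assumption is incompatible with the workload imbalance engineered into the PATROL schedule of Algorithm~\ref{f:4}.

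The first observation is infinitary: since every Look--Compute--Move cycle terminates in finite time and the algorithm is deterministic, each searcher visits its pivot vertex infinitely often and hence advances its \emph{Stage} variable cyclically through all $2m + 2T^2$ values, where $m$ is the height of the triangulation's dual tree and $T$ is the total number of triangles in the triangulations of all sub-branches. In particular, each searcher executes its perimeter block (the $2T^2 + 1$ consecutive $m$-tours at which the $j$-cap binds) infinitely many times.

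Next I would perform a workload accounting. Writing $L_j$ for the number of edges in a $j$-tour of the augmented polygon $\widetilde{P}$, and using $L_j \le L_m$ since $P_j \subseteq P_m$, one PATROL cycle consumes at most $2m\, L_m$ Move phases in its $2m-1$ non-perimeter stages against $(2T^2+1)L_m$ Move phases in its perimeter block. Consequently, between any two consecutive perimeter stages of the same searcher, the number of intervening Move phases performed by that searcher is at most $2m L_m$, whereas one perimeter block spans $(2T^2+1)L_m$ Move phases.

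The delicate third step converts this Move-phase imbalance into a real-time overlap. I would fix any real time $t^{\star}$ at which $s_1$ enters one of its perimeter blocks. By the disjointness hypothesis, $s_2$ is then in a non-perimeter stage and needs at most $2m L_m$ further Move phases before necessarily entering its own perimeter stage, whereas $s_1$ still has $(2T^2+1)L_m$ Move phases to go before leaving its block. Using the asynchrony axiom that no Move phase lasts forever, together with the global event ordering of both searchers' cycle completions, a careful pigeonhole argument applied over sufficiently many consecutive PATROL cycles of $s_1$ forces $s_2$ to enter a perimeter stage while $s_1$ is still executing one, contradicting the disjointness assumption.

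The main obstacle is making the third step rigorous in the fully asynchronous model, where the scheduler may stretch individual Move phases arbitrarily and real-time ratios carry no information by themselves. The argument has to be phrased in terms of discrete event counts rather than absolute durations: the constant $2T^2$ in Algorithm~\ref{f:4} is calibrated precisely so that, when measured in pivot returns, $s_1$'s perimeter block comfortably outlasts any non-perimeter sojourn of $s_2$. I would formalize this by an induction on the number of completed PATROL cycles of the slower searcher, exploiting the fact that the faster one must traverse its entire perimeter block many times in the interim and so cannot avoid being caught inside a perimeter stage at the moment $s_2$ reaches its own.
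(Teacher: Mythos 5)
Your approach correctly identifies the structure (the perimeter block is long compared to the non-perimeter block) and you correctly flag the crux — converting a workload imbalance into a forced temporal overlap under full asynchrony — as the ``delicate third step.'' But the proposal does not actually close that gap, and the tool you reach for (counting Move phases, then a pigeonhole over cycle completions) cannot close it. In the asynchronous model there is no relationship whatsoever between the number of Move phases of $s_1$ and those of $s_2$ in any given real-time interval: the scheduler can let $s_1$ complete its entire $2T^2$-stage perimeter block while $s_2$ performs a single Move phase. Counting Move phases, edges traversed, or pivot returns of either searcher in isolation tells you nothing; you need a mechanism that forces $s_2$ to make progress \emph{because} $s_1$ does.

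The missing idea — which is the heart of the paper's proof — is a visibility coupling between the two searchers. During one full perimeter ($m$-)tour, $s_1$ walks the boundary of $\widetilde{P}$ and thereby sees into every triangle of the triangulation and into the central polygon $Q$. Hence if $s_2$ stayed in the same triangle (or on its boundary) for the entire duration of one perimeter tour of $s_1$, the two searchers would necessarily become mutually aware, contradicting the assumption. This converts each perimeter tour of $s_1$ into at least one forced ``triangle transition'' of $s_2$, a unit of progress that is robust to arbitrary scheduling. One then bounds the number of triangle transitions $s_2$ can make within a single $j$-tour by the total number of triangles $t$ (because each triangle in $P_j$ contributes at most two consecutive boundary edges, so $s_2$ re-enters it at most once per tour), concluding that after at most $t \cdot (2m-1) \le 2t^2 - t$ perimeter stages of $s_1$, $s_2$ must have entered a perimeter stage — while $s_1$ still has spare perimeter stages to go. Your edge-count $L_m$ and Move-phase accounting is the wrong currency; the right currency is $s_2$'s triangle transitions, and the exchange rate is set by $s_1$'s observational coverage, not by any ratio of cycle lengths.
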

\begin{proof}
Assume that the searchers never become mutually aware. Then, at some point in time, a searcher $s_1$ must start a full series of perimeter stages. If, at this point, the other searcher $s_2$ is also in a perimeter stage, there is nothing to prove. So, let us assume that $s_2$ is not in a perimeter stage. Now, $s_1$ will perform a full series of perimeter stages, repeatedly following the boundary of $\widetilde P$, and touching the central polygon $Q$ and every triangle of the triangulation at each stage. This means that, in the time $s_1$ performs one complete perimeter stage, $s_2$ cannot remain in the same triangle of the partition (or on its boundary), because otherwise it certainly meets $s_1$.

Let $t$ be the total number of triangles in the triangulation of the sub-branches of $P$, and let $T_j$ be the set of such triangles that are in $P_j$. Recall that a $j$-tour, for $j>0$, is a tour of the perimeter of $P_j$. Note that each triangle of $T_j$ has either one edge or two consecutive edges on the boundary of $P_j$. It follows that, as $s_2$ performs a $j$-tour, it moves from one triangle of $T_j$ to another at most $|T_j|\leq t$ times. For $j=0$, the same is trivially true: $s_2$ touches at most $t$ triangles in a $0$-tour.

So, every time $s_1$ performs $t$ perimeter stages, $s_2$ must complete at least one stage. The number of non-perimeter stages is $2m-1\leq 2t-1$, which means that after at most $2t^2-t$ perimeter stages of $s_1$, also $s_2$ must start a perimeter stage. When this happens, $s_1$ still has some perimeter stages to perform, because they are $2t^2$ in total. Hence, both $s_1$ and $s_2$ will be found in a perimeter stage at the same time.\qed
\end{proof}

\begin{corollary}\label{c:1b}
Let two discordant $P$-searchers be executing Algorithm~\ref{f:4}, let both be in the PATROL phase, and let both have a correct representation of the polygon $P$ in memory, which is rotationally symmetric. The searchers will eventually become mutually aware.
\end{corollary}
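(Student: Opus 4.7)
The plan is to combine Lemma~\ref{l:1} with the discordance hypothesis on $s_1$ and $s_2$. By Lemma~\ref{l:1}, I may assume there is a time $t_0$ at which both searchers are simultaneously in a perimeter stage, and the bookkeeping in the proof of that lemma shows that at $t_0$ each still has on the order of $\textsc{PolygonTriangles}^2$ perimeter stages to perform. During each perimeter stage a searcher traces a counterclockwise $m$-tour, i.e.\ a fixed closed walk $W$ around the boundary of $\widetilde P$, starting and ending at its pivot vertex. Because $s_1$ and $s_2$ are discordant, their local orientations of ``counterclockwise'' disagree, so in the global frame they traverse $W$ in opposite directions.

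Next I would encode positions along $W$ by a single cyclic coordinate. Let $N$ be the number of vertex-steps of $W$, and let $i(t),j(t)\in\mathbb{Z}/N\mathbb{Z}$ denote the current walk-positions of $s_1$ and $s_2$, updated each time one of them arrives at a new vertex. Define the cyclic gap $g(t)=j(t)-i(t)\pmod N$, measured in $s_1$'s direction of travel. Every vertex-arrival of $s_1$ increments $i$ by one; every vertex-arrival of $s_2$ decrements $j$ (since $s_2$ walks in the opposite direction along $W$); each such event therefore decreases $g$ by one, and a simultaneous arrival of both decreases it by two.

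The heart of the argument is then that within at most $N$ vertex-arrival events after $t_0$, either $g(t)$ reaches $0$, or a simultaneous double-decrement sends it from $1$ to $N-1$. In the first case, $s_1$ and $s_2$ occupy the same vertex of $W$, and the first subsequent Look phase (the initial step of the cycle of Algorithm~\ref{f:4}) performed by either searcher sees the other. In the second case, the two have just swapped along a common edge $e$ of $W$: $s_1$ now sits at the endpoint of $e$ that $s_2$ just vacated, and vice versa. Since $e\subseteq\widetilde P\subseteq P$, the two endpoints are mutually visible along $e$, and the Look each searcher performs on arrival establishes mutual awareness.

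The main obstacle I anticipate is verifying that these $N$ events really do fit inside the overlap window supplied by Lemma~\ref{l:1}, and that the asynchronous scheduler cannot prevent the ``crossing'' of $g$ from producing a Look at a mutually visible configuration. The first point follows immediately from the quadratic count in the proof of Lemma~\ref{l:1}, since each of the many remaining perimeter stages of either searcher contributes $N$ vertex-arrivals, far more than the $N$ events needed. The second point is handled by the observation that Look is the very first action of each cycle, so any vertex arrival is instantly followed by a snapshot; once $g$ is in the ``met'' or ``swapped'' configuration, the next Look phase of either searcher necessarily finds the other visible, and mutual awareness is attained.
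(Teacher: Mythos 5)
Your proposal is correct and follows essentially the same route as the paper: invoke Lemma~\ref{l:1} (and the quantitative bookkeeping inside its proof) to obtain an overlap during which both searchers are in perimeter stages with at least one full $m$-tour each still to perform, then observe that two traversals of the same closed walk in opposite directions (forced by discordance) must cross, yielding mutual awareness. The only difference is presentational: the paper states the crossing as immediate, whereas you make it explicit via the cyclic-gap argument; your side remark that \emph{each} searcher still has on the order of $\mathrm{PolygonTriangles}^2$ perimeter stages left is slightly inaccurate (the one that entered the perimeter stages first may have far fewer), but all that is needed, and all that Lemma~\ref{l:1}'s count actually delivers, is that each has at least one complete tour remaining.
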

\begin{proof}
Suppose for a contradiction that the two searchers $s_1$ and $s_2$ never become mutually aware. Following the proof of Lemma~\ref{l:1}, we argue that $s_1$ still has some perimeter stages to perform when $s_2$ is finally forced to start its first one. So, they will both do at least one complete $m$-tour in opposite directions (because they are discordant), thus necessarily crossing each other and becoming mutually aware. Note that our proof goes through even if $s_2$ is performing a perimeter stage when $s_1$ begins the first one. Indeed, $s_2$ must move on to a non-perimeter stage before $s_1$ completes its first perimeter stage, or else they would meet. From now on the proof is the same as in Lemma~\ref{l:1}, with the only difference that $s_1$ has at most one less perimeter stage to perform, which is irrelevant (we chose the number of perimeter stages to be much higher than needed).\qed
\end{proof}

\begin{lemma}\label{l:2}
Let two concordant $P$-searchers be executing Algorithm~\ref{f:4}, let both be in the PATROL phase, and let both have a correct representation of the polygon $P$ in memory, which is rotationally symmetric. If one searcher begins a $j$-tour in an ascending stage while the other searcher is performing a $(j+1)$-tour in a descending stage, with $0\leq j<m$, they eventually become mutually aware. Similarly, if one searcher begins a $j$-tour in a descending stage while the other searcher is performing a $(j-1)$-tour in an ascending stage, with $0<j<m$, they eventually become mutually aware.
\end{lemma}
\begin{proof}
We will only discuss the case in which searcher $s_1$ is starting a $j$-tour in an ascending stage while searcher $s_2$ is performing a $(j+1)$-tour in a descending stage. The other case is symmetric and the proof is essentially the same (it is actually simpler, because it does not involve a $j=0$ or a $j=m$ case). We are going to show that, by the time $s_1$ has finished the current stage, it becomes mutually aware with $s_2$. Note that, since the searchers are concordant, their notion of clockwise direction is the same, and we may assume that this notion also agrees with the ``global'' one. So, the searchers are traveling in opposite directions: $s_1$ is ascending (hence going clockwise) and $s_2$ is descending (hence going counterclockwise).

Because $s_1$ has just started a $j$-tour, it will perform a complete clockwise tour of the boundary of $P_j$, while $s_2$ is somewhere in the middle of a counterclockwise tour of the boundary of $P_{j+1}$ and will then proceed with a tour of $P_j$, as well (because $s_2$ is descending). The set difference between $P_{j+1}$ and $P_j$ is a collection of triangles $T$ of the triangulation of the sub-branches of $P$. Each triangle in $T$ has an edge in common with $P_j$. So, as $s_1$ travels around $P_j$, it also gets to see all of $T$, including all the locations in which $s_2$ could be as it performs the $(j+1)$-tour. Therefore, if $s_1$ finishes the $j$-tour before $s_2$ has completed the $(j+1)$-tour (or at the same time), they must become mutually aware. This happens in particular if $s_1$ reaches the pivot vertex of $s_2$ before $s_2$ does (or at the same time).

Suppose now that $s_1$ reaches the pivot vertex of $s_2$ strictly after $s_2$. So, when $s_2$ reaches its pivot vertex, it finishes its $(j+1)$-tour and starts a $j$-tour, while $s_1$ is still performing its $j$-tour. If $j=0$, both searchers are on the boundary of the central polygon $Q$, and so they become mutually aware. If $j>0$, this is again a descending stage for $s_2$, and so the $j$-tour it performs is counterclockwise. Observe that $s_1$ cannot terminate the current stage before reaching the pivot vertex of $s_2$. But since now both searchers are walking on the boundary of $P_j$ in opposite directions, they are bound to bump into each other and become mutually aware.\qed
\end{proof}

\begin{remark}\label{r:1}
Lemma~\ref{l:2} also holds when both searchers start a $j$-tour in opposite directions at the same time, because this can be considered the very end of the second searcher's previous $(j+1)$-tour (or $(j-1)$-tour).
\end{remark}

\begin{theorem}\label{t:3}
There is an algorithm that solves the Meeting problem with two searchers (regardless of their initial memory contents) in every polygon whose barycenter does not lie in a hole.
\end{theorem}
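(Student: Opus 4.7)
The plan is to reduce to the rotationally symmetric case and then combine Lemmas~\ref{l:1} and~\ref{l:2} with Corollary~\ref{c:1b}. If the symmetricity of $P$ is $\sigma = 1$, Algorithm~\ref{f:4} runs exactly as Algorithm~\ref{f:2} (by the branch noted just before the algorithm listing), so the conclusion follows from Theorem~\ref{t:2} with $\sigma = 1$. Hence I assume $P$ is rotationally symmetric. Following the template of the proof of Theorem~\ref{t:2}, I first handle initialization: the persistent-variable consistency checks guarantee that any searcher whose initial memory is incongruous with the true geometry of $P$ eventually notices the discrepancy and restarts the execution, and this can happen only finitely many times; therefore I may assume both searchers eventually enter and remain in the PATROL phase with a correct representation of $P$, a correct augmented polygon $\widetilde P$, a correct triangulation of the sub-branches, a correctly computed dual-tree height $m$, and a pivot vertex in the similarity class $C$ of vertices closest to the center. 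Since $\widetilde P$, the triangulation, and $C$ are computed in a similarity-invariant way, both searchers agree on these data up to handedness and up to the arbitrary choice of pivot within $C$.

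Next I would split on handedness. If the two searchers are discordant, Corollary~\ref{c:1b} applies directly. If they are concordant, Lemma~\ref{l:1} furnishes a first instant at which both are simultaneously in perimeter stages. The goal is then to force, at some later moment, a configuration matching the hypothesis of Lemma~\ref{l:2} (or its variant stated in Remark~\ref{r:1}): one searcher begins a $j$-tour in an ascending stage while the other is mid $(j+1)$-tour in a descending stage, or the symmetric situation with ascending and descending swapped. I would establish this with a pigeonhole/counting argument in the spirit of the proof of Lemma~\ref{l:1}: letting $t$ be the total number of triangles in the triangulations of the sub-branches, the length $2t^2$ of the perimeter block strictly dominates $t\cdot(2m-1)$, and the scheduler cannot keep a searcher inside a single triangle while the other completes a full perimeter tour; hence the two searchers' stage indices drift relative to one another, and within a bounded number of cycles some pair of matching indices $(j,j+1)$ lines up with opposite ascending/descending labels, at which point Lemma~\ref{l:2} forces them to meet.

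The main obstacle is this last step in the concordant case, namely excluding adversarial schedulings that keep the two searchers in perfect lockstep at rotationally symmetric positions (one obtained from the other by a nontrivial rotation in the symmetry group of $P$). The key geometric fact defeating such schedulings is that the central polygon $Q$ spanned by the vertices of $C$ is equiangular, hence convex, and by the hypothesis that the barycenter of $P$ does not lie in a hole, $Q\subseteq P$; consequently any two points of $\partial Q$ see each other inside $P$. Thus if the adversary ever synchronizes the two searchers during their central stage (the ascending $0$-tour, whose trajectory lies on $\partial Q$), they meet immediately on $\partial Q$; and if the adversary refuses to synchronize them then the counting argument above produces a configuration in which Lemma~\ref{l:2} applies. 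In either case the two searchers become mutually aware; the routine book-keeping for ``one searcher is stopped after spotting the other'' is handled exactly as in the last paragraph of the proof of Theorem~\ref{t:2}, as flagged in the remark stated just before Lemma~\ref{l:1}.
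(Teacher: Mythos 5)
Your proposal is correct and follows essentially the same route as the paper's proof: reduce to the rotationally symmetric case, dispose of discordant searchers via Corollary~\ref{c:1b}, use Lemma~\ref{l:1} to synchronize concordant searchers into simultaneous perimeter stages, and then either meet on the convex central polygon $Q$ (which lies in $P$ precisely because the barycenter is not in a hole) or invoke Lemma~\ref{l:2} together with Remark~\ref{r:1}. The only minor imprecision is that the Lemma~\ref{l:2} configuration is not really produced by the $2t^2$ counting argument (that argument is the content of Lemma~\ref{l:1}); rather, once both searchers have been simultaneously in descending perimeter stages, the first one to start the central stage ascends from index $0$ while the other is still descending, and the monotone crossing of their stage indices is what yields the $(j,j+1)$ hypothesis --- but this is exactly the step the paper itself treats tersely, so the proofs match.
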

\begin{proof}
We will show that Algorithm~\ref{f:4} correctly solves the Meeting problem for two searchers in any polygon $P$ whose barycenter does not lie in a hole. The proof of correctness is the same as that of Theorem~\ref{t:2}, except for the PATROL phase. Also, as a searcher still visits every vertex of the polygon during the PATROL phase, it still eventually finds out if its memory is inconsistent with $P$, and in that case it restarts the execution. This can happen only once, because afterwards its memory contents are going to be always correct. So, in the following, we will assume that both searchers already have a correct picture of $P$ in memory, and are both in the PATROL phase. Moreover, since the new algorithm works in the same way as the old one if $P$ is not rotationally symmetric (and the proof of correctness is the same as in Theorem~\ref{t:2}), we will assume that $P$ is rotationally symmetric.

If the two searchers are discordant, they must become mutually aware, due to Corollary~\ref{c:1b}. Let us then assume that they are concordant, and that they never become mutually aware. Therefore, by Lemma~\ref{l:1}, they are eventually found in a perimeter stage at the same time. Then, they will perform all the remaining descending stages, followed by the ascending stages, starting with the central stage. If they start the central stage at the same time, they necessarily become mutually aware, because they are on the boundary of the central polygon $Q$, which is convex and empty. So, one searcher must begin the central stage while the other is still in a descending stage. Then, as one searcher ascends and the other descends, the hypotheses of Lemma~\ref{l:2} are going to be satisfied (also due to Remark~\ref{r:1}), which means that the searchers eventually become mutually aware.\qed
\end{proof}

\paragraph{Polygons with even symmetricity.}
Observe that, if a polygon $P$ is centrally symmetric and its center lies in a hole, then two $P$-searchers placed in symmetric locations and activated synchronously will never see each other (regardless of the shape of the hole). Therefore, Theorem~\ref{t:3} yields a characterization of the polygons of even symmetricity in which the Meeting problem can be solved with two searchers.

\begin{corollary}\label{c:4}
If $P$ has even symmetricity, then the Meeting problem for two $P$-searchers is solvable if and only if the barycenter of $P$ does not lie in a hole.\qed
\end{corollary}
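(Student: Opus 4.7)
The plan is to handle the two directions of the biconditional separately; only one of them requires any real argument.

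The sufficiency ($\Leftarrow$) is immediate from Theorem~\ref{t:3}: if the barycenter of $P$ does not lie in a hole, then Algorithm~\ref{f:4} solves the Meeting problem with two searchers, and the even-symmetricity hypothesis plays no role in this direction.

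For necessity ($\Rightarrow$), I would argue by contrapositive. Assume $P$ has even symmetricity $\sigma$ and that its barycenter $c$ lies in a hole (so in particular $c\notin P$). The key geometric observation is that, since $\sigma$ is even, $P$ is invariant under the rotation $\rho$ of angle $\pi$ centered at $c$; hence for any point $p\in P$, its image $p'=\rho(p)$ also lies in $P$, and $c$ is the midpoint of the segment $pp'$. Since $c\notin P$, the segment $pp'$ is not contained in $P$, so $p$ cannot see $p'$. This means that two searchers occupying $\rho$-symmetric positions are never mutually visible.

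The rest is essentially the same adversarial construction used in Theorem~\ref{t:1}, specialized to $\sigma'=2$ and to the rotation $\rho$. I would place two searchers at positions $p$ and $\rho(p)$, initialize their persistent memory contents identically, and orient their local reference systems as $\rho$-images of each other; the scheduler is then taken to activate them fully synchronously. An induction on completed Look--Compute--Move cycles shows that after each cycle the two searchers are still at $\rho$-symmetric locations with symmetric snapshots and symmetric memory contents: indeed each searcher sees a congruent view of $P$, executes the same deterministic algorithm on the same input (up to the rigid motion relating their frames), and therefore computes a destination point whose global image under $\rho$ coincides with the destination of the other searcher. Consequently, at every moment during the execution the two searchers sit at $\rho$-symmetric positions and, by the geometric observation above, can never see each other; thus no algorithm solves the Meeting problem under this initial configuration and schedule.

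The only step that needs any care is the inductive invariance of symmetry, but this is exactly the argument already given in the proof of Theorem~\ref{t:1}, so referencing it suffices. The genuinely new content is the one-line visibility argument: a $\pi$-rotation around a point outside $P$ sends every point of $P$ to a point it cannot see. No further obstacle is anticipated.
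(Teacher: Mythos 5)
Your proposal is correct and matches the paper's argument: the paper likewise derives sufficiency from Theorem~\ref{t:3} and necessity from the observation that even symmetricity makes $P$ centrally symmetric, so two searchers at centrally symmetric positions, activated synchronously, can never see each other because the segment joining them passes through the barycenter lying in a hole. You have simply spelled out in more detail the one-sentence remark preceding the corollary, including the inductive symmetry-preservation argument borrowed from Theorem~\ref{t:1}.
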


\section{Memoryless Implementations}\label{s:4}

The Meeting algorithms given in the previous section assumed that the searchers were able to memorize the entire history of the snapshots they had taken since the beginning of the execution. With a little extra effort, we could have made a more efficient use of memory, and we could have designed equivalent algorithms that used only a number of variables that is linear in the number of vertices of the polygon.

In this section, we are going to do much better: we will show that we can re-implement our algorithms without using any persistent memory at all. So, our searchers will be \emph{oblivious}, in the sense that the destination point computed in each Compute phase will depend only on the snapshot taken in the most recent Look phase, while all previous snapshots and computations are forgotten.

We achieve this in two steps: in Section~\ref{s:4.1}, we will discuss two ways of encoding all the permanent variables as a single real number; in Section~\ref{s:4.2}, we will show how to apply these encoding techniques to our algorithms.

\subsection{Encoding Persistent Variables}\label{s:4.1}

As a first step, we want to be able to encode all the persistent variables used in our algorithms as a single real number. We will briefly discuss a naive approach, which works for every polygon but yields a code that is not computable on a real random-access machine. Then, we will present an improved code that can be computed with basic arithmetic operations but requires the vertices of the polygon to be algebraic points.

\paragraph{Representing snapshots.}
We have used several types of persistent variables in Algorithms~\ref{f:2} and~\ref{f:4}, such as integers, reals, and snapshots. However, since the algorithms are deterministic, only the snapshots are really necessary. If a searcher remembers the history of the snapshots taken during the execution, it can reconstruct at any time all its past computations, including the history of all the modifications to the other persistent variables (recall that the values of these variables are fixed after each memory ``reset'', when a searcher erases its own memory and restarts the execution). So, since the non-snapshot variables are redundant, we will focus on representing snapshots. Up to this point, we have treated snapshots as primitive data types that could somehow be processed by searchers, but now we have to define them exactly in terms of more elementary variables.

Recall that a snapshot is a representation of the visible portion of the polygon plus a list of visible searchers. The visible searchers are not very important in our Meeting algorithms, and do not even have to be stored in the persistent memory of the observing searcher. Their exact locations are irrelevant, as well. In fact, we may assume that each snapshot that a searcher gets as input simply contains a flag indicating the presence or absence of other searchers in the visible area.

The part of snapshot representing the visible portion of the polygon demands more attention: it encodes a sub-polygon of $P$ expressed in the coordinate system of the observing $P$-searcher. This region is fully described by the portion of $P$'s boundary that is seen by the searcher, which in turn is a union of line segments, each of which is a sub-segment of an edge of $P$. So, we can stipulate that a snapshot takes the form of a finite array of real numbers, say $$(x_1, y_1, x'_1, y'_1, x_2, y_2, x'_2, y'_2, \dots),$$ where $(x_i, y_i)$ and $(x'_i, y'_i)$ are the endpoints of the $i$th segment of the portion of $P$'s boundary that is visible to the searcher (note that none of these points is necessarily a vertex of $P$). Snapshots are received as input by the searcher in this form (plus the visible searchers flag defined in the previous paragraph), and they are also represented by the searcher in this form when they are stored in memory (without the visible searchers flag).

It is easy then for the searcher to manipulate this data type in its computations. For instance, it can readily merge different snapshots and eventually construct a full representation of $P$ as a list of its edges.

\paragraph{General idea.}
As an oblivious $P$-searcher has no persistent memory and can only see its current surroundings, the only way it can implicitly memorize information is by carefully positioning itself within $P$. Specifically, suppose that, among the vertices of $P$ that are visible to the searcher, there is a unique vertex $v$ that is closest to it, and let $d$ be their distance in the searcher's coordinate system (recall that different searchers may have difference units of distance). Then, we say that the searcher \emph{encodes} the number $d$, and its \emph{virtual vertex} is $v$. Note that, since $v$ is the closest visible vertex, it is also fully visible to the searcher (cf.~Figure~\ref{f:a}), which is therefore always able to identify it as a vertex of $P$ by examining a snapshot taken from its current location, even if the snapshot is represented as we explained above (hence not explicitly marking the vertices of $P$). Once the searcher has identified $v$, then it can easily retrieve $d$.

So, a $P$-searcher can encode a range of non-negative real numbers that depends on its unit of distance and the shape of $P$. Also, not all virtual vertices allow to encode the same set of values. However, if $d$ can be encoded under some virtual vertex $v$, then any value in the range $[0,d]$ can be encoded, by letting the searcher approach $v$ by a suitable amount.

Since this method only allows a searcher to encode one number at a time, our goal is to ``pack'' a whole list of snapshots into a single non-negative real number. We would also like to define our packing in such a way that the numbers $d$ and $d/2$ have the same meaning, for every $d\geq 0$. This is to make sure that everything that can be packed into a number (which may be very large) can actually be encoded by any searcher under any virtual vertex, regardless of the searcher's unit of distance. This ``scalability'' property also gives a searcher the ability to get arbitrarily close to its virtual vertex without losing information, by repeatedly moving halfway towards it (note that the virtual vertex is still the closest visible vertex after this move).

\paragraph{Naive code.}
To pack our data into a single real number, we use the number's binary digits. Let us restrict our attention to the real numbers in the interval $[0,1)$. Each of these numbers is identified by the fractional part of its binary representation, which is an infinite sequence of binary digits. Moreover, if we forbid binary representations ending with an infinite sequence of digits 1, the binary representation of any real number is unique.

It is straightforward to pack a finite sequence of real numbers $(a_1, a_2, \dots, a_n)$ into a real number in $[0,1)$. We first express each $a_i$ as a \emph{sign bit} $s_i$, which is $0$ if $a_i\geq 0$ and $1$ otherwise, an infinite binary \emph{mantissa} $\left(b^{(i)}_1, b^{(i)}_2, \dots\right)$, and a non-negative binary \emph{exponent} $e_i$, such that
$$a_i=(-1)^{s_i}\cdot\sum_{j=1}^\infty b^{(i)}_j\cdot 2^{e_i-j}.$$
Then we express each exponent $e_i$, which is a non-negative integer, as the infinite sequence of binary digits $\left(e^{(i)}_1, e^{(i)}_2, \dots\right)$, such that
$$e_i=\sum_{j=1}^\infty e^{(i)}_j\cdot 2^{j-1}.$$

Hence we have $n$ sign bits to pack, plus $2n$ infinite binary sequences. We also want to fulfill the scalability requirement of our code, and so we add a \emph{scale} $\lambda$, which is a non-negative integer. Our final result is the real number whose binary representation is
$$0.0^\lambda1^n0s_1s_2\dots s_nb^{(1)}_1e^{(1)}_1b^{(2)}_1e^{(2)}_1\dots$$
 $$\dots b^{(n)}_1e^{(n)}_1b^{(1)}_2e^{(1)}_2b^{(2)}_2e^{(2)}_2\dots b^{(n)}_2e^{(n)}_2\dots.$$
By $0^\lambda$ we mean a sequence of $\lambda$ digits $0$, and by $1^n$ we mean a sequence of $n$ digits $1$. It is clear that the original sequence $(a_1, a_2, \dots, a_n)$ can be reconstructed from this number, and that the number can be made arbitrarily small by increasing $\lambda$.

Since we know how to represent a snapshot by a finite array of coordinates, we can also pack it into a single real number. Then, to pack an array of $m$ snapshots, we can simply pack each snapshot separately, and then pack the resulting $m$ numbers into a single number.

\paragraph{Real random-access machines.}
Let us see how our naive encoding (and decoding) strategy could possibly be computed, and what it means to compute a real number. Of course, a traditional Turing machine with $n$ tapes containing the binary representation of every $a_i$ could compute any digit of the naive code in finite time. However, computing all of its digits requires an infinitely long computation.

To overcome this limitation of Turing machines, some models of computation that operate directly on real numbers have been introduced. These include the \emph{Blum-Shub-Smale machine}~\cite{blum}, which is a random-access machine whose registers can store arbitrary real numbers. Its computational primitives are the four basic arithmetic operations on real numbers, and it can test (and branch) if a real number is positive. Each of these operations takes one unit of time.

Depending on the application, it is also customary to extend the basic model with additional primitives, such as root extractions, trigonometric functions, etc. Of course, the extra primitives that we include should be somewhat well-behaved and intuitively computable, or else we would defeat the purpose of using these machines as models of computation. For instance, it would be reasonable to require at the very least that our unary primitives be real functions of a real variable whose set of discontinuities is nowhere dense. This would admit all the algebraic functions, the trigonometric functions, the exponential functions, the logarithms, and many more.

\paragraph{Non-computability of the naive code.}
As it turns out, our naive encoding method is not implementable on an extended Blum-Shub-Smale machine. Let us consider the simple case in which we want to pack the two numbers $a=0.b_1b_2\dots$ and $a'=0.b'_1b'_2\dots$ into the number $f(a,a')=0.110b_1b'_1b_2b'_2\dots$. Being able to compute $f(a,a')$ for every $a$ and $a'$ is equivalent to having a primitive operator $g(x)$ that interleaves the binary digits of $x$ with $0$'s (assuming that $0\leq x<1$). Indeed, $g(x)=8\cdot f(x,0)-6$ and $f(a,a')=g(a)/8+g(a')/16+3/4$.

Assume that $x\neq 0$ has a finite binary representation, i.e., $x=0.b_1b_2\dots b_m$, with $b_m=1$. Then, $g(x)=0.b_10b_20\dots 0b_m$. Now, let $x_i=x-2^{-m-i}$. Clearly, $\lim_{i\to\infty}x_i=x$. We have $x_i=0.b_1b_2\dots b_{m-1} 01^i$, and hence $g(x_i)=0.b_10b_20\dots 0b_{m-1}00(01)^i$. So,
$$\lim_{i\to\infty}g(x_i)=0.b_10b_20\dots 0b_{m-1}00\overline{01}\neq g(x).$$
Therefore, he have
$$g\left(\lim_{i\to\infty}x_i\right)=g(x)\neq\lim_{i\to\infty}g(x_i),$$
which means that $g$ is not continuous at $x$. Recall that $x$ was a generic number with a finite binary representation. Hence, $g$ is discontinuous on the set of rationals of the form $m/2^n$, with $0<m<2^n$, which is dense in $(0,1)$. So, according to our discussion on computability, $g$ is not a reasonable primitive for an extended Blum-Shub-Smale machine. It is not hard to generalize our argument to the naive encoding of more than two numbers, as well as the decoding functions.

\paragraph{Polygons with algebraic vertices.}
We now propose a more sophisticated encoding strategy, which is computable even on a basic Blum-Shub-Smale machine (i.e., the one with the four basic arithmetic operations only). A small drawback is that we can only apply this method if the vertices of the polygon $P$ have algebraic coordinates (i.e., they are \emph{algebraic points}) in some global coordinate system. (Recall that a real number is \emph{algebraic} if it is a root of a polynomial with integer coefficients.) Note that we do not require that the searchers' positions be algebraic points at any time during the execution. Their local units of distance do not have to be algebraic, either. As a consequence, even under our assumptions, the snapshots of $P$ that the searchers get do not necessarily have vertices with algebraic coordinates.

In practice, we are not imposing a big limitation on our inputs, in that basically all the polygons we can reasonably think of fall into this class. Indeed, the algebraic numbers include the rationals and are closed under basic arithmetic operations and extractions of roots of any degree~\cite{libro1}. Moreover, a simple consequence of de Moivre's formula is that the sines and cosines of all the rational multiples of $\pi$ are algebraic~\cite{lehmer}. Hence, the vertices of all the regular polygons inscribed in the unit circle and having a vertex in $(1,0)$ are algebraic points. So, for instance, we could construct the vertex set of our polygon $P$ by putting together copies of these ``unit polygons'', rotated by rational multiples of $\pi$ and scaled by rational factors. This simple scheme already yields a very rich class of polygons of all symmetricities.

\paragraph{Representing algebraic reals.}
The reason why we insist on working with algebraic numbers is that they have concise representations that can be manipulated efficiently. To understand our technique, it is worth considering the rational numbers first. The polygons with rational vertices do not constitute a very interesting class, because their symmetricity can only be $1$, $2$, or $4$ (indeed, this is equivalent to the fact that, for $n\notin\{1,2,4\}$, there are no regular $n$-gons in the plane whose vertices have integer coordinates, which in turn can be proved by standard algebraic methods~\cite{integer}). Nonetheless, discussing rational numbers allows us to expose some of the key ideas of our encoding method without getting involved with technicalities. Let $p/q$ be a rational number, with $q>0$. We can describe it by three non-negative integers: a sign bit for $p$, the absolute value $|p|$, and $q$. We represent each non-negative integer $n$ as the bit string $0^n1$, and then we simply concatenate the representations of all three numbers as the fractional part of a real number expressed in binary. For instance, the rational $5/3$ becomes $0.10000010001$ (because the sign bit of $5$ is $0$), and $-5/3$ becomes $0.010000010001$ (because the sign bit of $-5$ is $1$).

The advantage of this code over the standard binary representation is that this one is always finite. We can then retrieve the most significant bit $b_1$ of this representation by multiplying the number by $2$ and testing if the result is less than $1$. We then subtract $b_1$ from the result and we repeat the same process to retrieve $b_2$, etc. We know that all the remaining bits are $0$ when the number itself becomes $0$. With a similar technique we can modify any bit of the code, and therefore we can transform the entire code by any Turing-computable function. In particular, given the representations of two rationals $p/q$ and $p'/q'$, we can do basic computations on them without ever reconstructing the actual numbers. For instance, once we have the two pairs of integers $(p,q)$ and $(p',q')$, we can compute the sum $p/q+p'/q'$ as the pair of integers $(pq'+p'q,qq')$, without actually constructing the real number $p/q$ or the real number $p'/q'$. Note that the low-level bit manipulations that we do to achieve this are computable by a basic Blum-Shub-Smale machine.

Representing generic algebraic numbers is done in a similar way, although the procedure is complicated by some technical issues. Since the algebraic number $\alpha$ is a root of the polynomial with integer coefficients $Q(x)=a_nx^n+a_{n-1}x^{n-1}+\dots+a_1x+a_0$, we could attempt to represent it as the array of the coefficients of $Q$, i.e., $(a_n,a_{n-1},\dots,a_0)$. We may also assume that $Q$ is the \emph{minimal polynomial} of $\alpha$, which is unique. However, since $Q$ has $n$ complex roots (counted with their multiplicity), we also have to tell which of these roots we are representing. Fortunately, the real roots of $Q$ can be ordered. If $\alpha$ is the $i$th real root of $Q$, we therefore represent it by the sequence $(n,i,a_n,a_{n-1},\dots,a_0)$, which can easily be expressed as a single real number with a finite binary representation by encoding the sign bit and the absolute value of each of the integers, as we did with the rationals. Observe that we explicitly stored the number $n$ as a first thing, so we know when to stop during the decoding procedure (we may be given ``by accident'' a number with infinitely many $1$'s in its binary representation, and we do not want to get stuck in an infinite loop trying to decode it).

As we did with the rationals, once we have some algebraic numbers expressed in this finite form, we can do Turing-computable bit manipulations to compute all kinds of common functions on them. In particular, there are standard ways of computing the basic arithmetic operations, as well as root extractions of any degree. Moreover, since we are using minimal polynomials, each algebraic number has a unique code, and therefore testing if two of them are equal is trivial. A comprehensive exposition of these techniques, along with their theoretical background, is found in~\cite{libro1}. Essentially, this is also one of the several ways in which mathematical software such as Sage, Mathematica, and CGAL handles algebraic numbers and does exact computations with them.

The key point to keep in mind is that, once a number is encoded in this form, we cannot necessarily retrieve it in finite time; we can only approximate it arbitrarily well, for instance via Sturm's theorem~\cite{libro1}. However, we can still evaluate computable predicates on these numbers exactly, and have them influence the flow of our algorithms~\cite{libro1}.

\paragraph{Computable code.}
Suppose a basic Blum-Shub-Smale machine has an algebraic number $\alpha$ stored in a register; let us see how it can effectively construct its code. The machine starts generating all finite sequences of bits in lexicographic order. For each sequence, it checks if it is a well-formed code of an algebraic number; if it is, it extracts the coefficients of the polynomial $Q$ from it, as explained above. Then it computes $Q(\alpha)$, which requires only additions and multiplications of real numbers. Since $\alpha$ is algebraic, eventually a polynomial $Q$ is found such that $Q(\alpha)=0$. It is well known that $Q$ must be a multiple of the minimal polynomial of $\alpha$; hence, it is sufficient to factor $Q$ over $\mathbb Z$ and pick the irreducible factor that has $\alpha$ as a root: this factor $Q'$ must be the minimal polynomial of $\alpha$. Then Sturm's theorem can be applied to find out how many real roots of $Q'$ are smaller than $\alpha$, and this number is used along with the coefficients of $Q'$ to encode $\alpha$ (the details of this process are explained in~\cite{libro1}).

Now that we know how to compute the code of a single algebraic number, let us see how we can encode an entire snapshot of a polygon $P$ with algebraic vertices taken from a point $p\in P$ by some searcher $s$. Formally, this is the set of points of $P$ that are visible to $p$, transformed by an affine map $f_p\colon \mathbb R^2\to\mathbb R^2$. This map translates points from the global coordinate system to the coordinate system of $s$: it translates $p$ into the origin and then scales (by a non-zero factor) and rotates the plane about the origin. Note that $p$ is not necessarily an algebraic point, and the parameters of $f_p$ are not necessarily algebraic numbers. However, as $f_p$ is a similarity transformation, it preserves the ratios between segment lengths. Observe that the distance between any two vertices of $P$ is algebraic, because it is computable by basic arithmetic operations and extractions of square roots (by the Pythagorean theorem), and algebraic is therefore also the ratio between two such distances. It follows that the distances between vertices of $f_p(P)$ may not be algebraic, but all their ratios are. The same reasoning can be extended from the vertices of $P$ to all the points that are algebraic in the global coordinate system. These include the projection of any vertex of $P$ onto the line through two other vertices of $P$, because the coordinates of such a point can be computed by a rational function of the coordinates of the three vertices involved.

Now let $v$ and $v'$ be two vertices of $f_p(P)$, and let $g_{v,v'}\colon \mathbb R^2\to\mathbb R^2$ be the (unique) similarity transformation with positive scale factor that maps $v$ into $(0,0)$ and $v'$ into $(1,0)$. Based on the previous paragraph's reasoning, we can conclude that the vertices of $g_{v,v'}(f_p(P))$ are algebraic points. Indeed, let $u$ be one such vertex, let $u'=g^{-1}_{v,v'}(u)$, and let $u''$ be the projection of $u'$ onto $vv'$. We have that $f^{-1}(u'')$ is algebraic, and hence $|u.x|=\|vu''\|/\|vv'\|$ is also algebraic. Similarly, if $u'''$ is the projection of $u'$ onto the line through $v$ that is orthogonal to $vv'$ (hence $f^{-1}(u''')$ is algebraic), we have that $|u.y|=\|vu'''\|/\|vv'\|$ is algebraic, as well.

This basically means that, if $s$ picks two visible vertices of $f_p(P)$, say $v$ and $v'$, it takes the line $vv'$ as the $x$ axis and the length $\|vv'\|$ as the unit of distance, and expresses all the visible vertices of $f_p(P)$ in this new coordinate system, then these will be algebraic points, which can be encoded with our method by a basic Blum-Shub-Smale machine. The problem is that the snapshot taken from $p$ may not only contain vertices of $f_p(P)$. Recall that this snapshot is a list of sub-segments of the edges of $f_p(P)$: if an edge is only partially visible to $s$, it is seen by $s$ as a segment (or a collection of segments) with different endpoints. These endpoints may not be algebraic in the new coordinate system, and hence they cannot be encoded with our technique.

Our solution is to identify these potentially non-vertex endpoints and simply mark them with an ``undefined'' tag. These turn out to be precisely the endpoints that are not \emph{fully visible} to $s$ (cf.~Figure~\ref{f:a}). For instance, let $S=(x_1, y_1, x'_1, y'_1, x_2, y_2, x'_2, y'_2, \dots)$ be the snapshot received by $s$, and suppose that $(x_i,y_i)=c\cdot (x_j,y_j)$ for some $0<c<1$. Then, we mark $(x_j,y_j)$ with a special tag. More precisely, we add an ``undefined'' bit to all the entries of the snapshot, and we set it to $0$ or $1$, depending if the corresponding point is certainly a vertex of $f_p(P)$ or possibly not a vertex. Note that this check can be done by a basic Blum-Shub-Smale machine. Then we can pick any two ``defined'' points $v$ and $v'$ of $S$ (which obviously exist), use $vv'$ as the $x$ axis, and transform all the ``defined'' points of $S$ as detailed above. Each ``undefined'' point of $S$ is simply replaced with a $(0,0)$ (preserving its ``undefined'' tag), or any algebraic point of our choice. The result is a transformed snapshot $S'$ whose points are guaranteed to be all algebraic. Hence we can effectively encode their coordinates with a finite number of bits, and then concatenate all these sequences of bits into the binary representation of a single real number. We also encode the ``undefined'' bits and the indices of $v$ and $v'$ in $S$. Everything is preceded by the total number of elements in the code: as usual, this is to let the decoding procedure know when to stop. We denote the final result by $C(p,v,v')$.

Now, given $C(p,v,v')$, the searcher $s$ can decode it and reconstruct the ``defined'' vertices of the snapshot, as well as the edges between them. The coordinates of these vertices are still in our implicit form, but the searcher can operate on them, computing new algebraic points, again in the same implicit form. However, if $s$ is currently in $p$, and therefore has access to the original snapshot $S$, it can easily retrieve the actual coordinates of $v$ and $v'$, because their indices are stored in $C(p,v,v')$ (and they are plain integers). So, suppose that $s$ has computed a point $q$ in implicit form based on $C(p,v,v')$. By Sturm's theorem, it can explicitly construct a point $q'$ that is arbitrarily close to $q$ (i.e., $q'$ is not encoded in our implicit form, but it is a real number on which the machine can directly operate). Then, knowing the coordinates of $v$ and $v'$, $s$ can transform $q'$, via rational functions, back into the coordinate system in which $S$ is expressed (which is the local coordinate system of $s$). Knowing how close $q'$ is to $q$ (which is a parameter of Sturm's theorem that $s$ can set), and knowing the determinant of the transformation, $s$ can infer how close the resulting point is to the real one. In particular, if $q$ is supposed to represent a fully visible vertex of the polygon, $s$ can determine which vertex it is in finite time, by computing a good-enough approximation of it, and comparing it with the points in $S$.

We can pack any list of $m$ snapshots into a single real number by encoding $m$, followed by the codes of all the snapshots. Along with the snapshots, we can also pack as many other finitely described elements as we want. We may add a fixed-length ``label'' to the code of each element, describing its content and specifying if it represents a snapshot, an integer, etc. As with naive encoding, we also put a sequence of the form $0^\lambda1$ as a first thing in our code, where $\lambda$ is the scale.

Observe that our encoded snapshots are not exact copies of the real ones, because some information about the ``undefined'' points is lost. In the second part of this section, we will show that the information that we encode is enough for the purpose of our application to the Meeting problem.

\subsection{Adapting the Algorithms}\label{s:4.2}
Next we are going to apply our encoding methods to the Meeting algorithms of Section~\ref{s:3}, and we will show how oblivious searchers can solve the Meeting problem, as well. We will be using the improved encoding, so we will need searchers to be able to compute only basic arithmetic operations on real numbers, as well as extract square roots. Hence, internally, each searcher will run a Blum-Shub-Smale machine extended with a square-root primitive. Only the four basic arithmetic operations are required for our computable encoding method, but square roots are needed in the geometric computations. It is well known that the points whose coordinates can be computed by composing these five operations are precisely the ones that can be constructed with a compass and a straightedge~\cite{kaza}. (In turn, the Mohr-Mascheroni theorem states that these points can also be constructed with a compass alone~\cite{mascheroni}.)

\paragraph{Main ideas.}
Recall that both our Meeting algorithms work by making searchers jump from one vertex of the augmented polygon $\widetilde P$ to another. This behavior is roughly compatible with the idea of simulating memory by moving close enough to a vertex of $P$ and encoding information as the distance from it (in the terminology of Section~\ref{s:4.1}, this is called the \emph{virtual vertex}). When activated, a searcher will compute its distance from the virtual vertex (note that this requires the extraction of a square root), and it will decode this distance, thus retrieving its lost memory. It will then execute one of the old algorithms, ``pretending'' to be located exactly on the virtual vertex. Instead of moving onto the destination vertex, it will move close enough to it, re-encoding its entire memory plus the newest snapshot. Of course, this technique introduces several issues.
\begin{itemize}
\item Recall that some information is lost in the encoding of our snapshots, because some points are marked as ``undefined''. We have to make sure that this loss of information does not invalidate the correctness of our algorithms. (Indeed, we will show that the algorithms work as intended even if the vertices recorded in the encoding of a snapshot are just the fully visible ones.)
\item As explained later, each snapshot is encoded by first re-casting it into a different coordinate system, which is not necessarily the searcher's local one. A searcher may not be able to reconstruct this special coordinate system after it moves and its virtual vertex changes. We have to show how a searcher can ``transport'' snapshots around $P$ without compromising their usability. (The solution is to use a coordinate system where the $x$ axis is marked by the current and next virtual vertices, so the searcher can reconstruct it after moving.)
\item The EXPLORE phase of the algorithms relies on the connectedness of the visibility graph of $P$. If a searcher explores $P$ by approaching its vertices but without properly touching then, it may be unable to discover some unexplored vertices. We have to show how to avoid this situation. (This is resolved by making the searcher move close enough to all vertices of $P$ and to their angle bisectors.)
\item The tours performed in the PATROL phase turn at the pivot point and at the vertices of the augmented polygon $\widetilde P$, which are not necessarily vertices of $P$. Unfortunately, oblivious searchers cannot approach generic points without losing information. (We will show how to modify their paths to make them turn only at vertices, without compromising the correctness of the PATROL phase.)
\item During the PATROL phase, two searchers are supposed to become mutually aware, either because they travel on the same edge or diagonal of $P$ or because they reach the same triangle of a special triangulation. Once again, if searchers follow their predefined routes only approximately, they may fail to meet. (We will show how to avoid this by making the searchers move within a thin-enough band that approximates the intended path.)
\end{itemize}
In the following, we will address all these issues in greater detail.

\paragraph{Approaching vertices.}
In order to apply our encoding strategy, we must first ensure that a searcher has a well-defined virtual vertex. If a searcher has more than one closest visible vertex, it just moves to one of them. Similarly, if at any time a searcher realizes that the information it is currently encoding is either internally incoherent or contrasts with the current snapshot, it moves onto its virtual vertex. So, when a searcher finds itself on a vertex, it knows that it has to restart the execution from the beginning.

In all other cases, a searcher has a destination vertex, and it moves close enough to it. It may not be able to determine right away how close it has to move, but it can reduce this distance later, if needed. In Section~\ref{s:4.1} we introduced the \emph{scale} of a code, and we argued that a searcher can always get as close as it wants to its virtual vertex (by adjusting the scale) without losing information. This ``approaching move'' keeps the searcher on the same ray emanating from the virtual vertex, and it is therefore useful when the searcher wants to maintain a certain angle with respect to the virtual vertex's incident edges.

In general, during the EXPLORE phase, before choosing its next destination vertex, a searcher $s$ will first adjust its position around its virtual vertex $v$ in such a way that all the points of its two incident edges $vv'$ and $vv''$ become \emph{fully visible} to $s$ (cf.~Figure~\ref{f:a}). This is relatively easy to do, since $s$ has access to the current snapshot. If the interior of one of the incident edges of $v$, say $vv''$, is completely invisible to $s$, it means that $v$ is a reflex vertex of $P$, and $s$ can see at least part of the edge $vv'$. In this case, $s$ moves to the line through $v$ perpendicular to $vv'$. The destination point $p'$ is chosen in such a way that the circle through $v$ centered in $p'$ intersects the boundary of $P$ only in $v$ (see Figure~\ref{f:8}). This guarantees that $v$ will be the virtual vertex again. From there, if the interior of $vv''$ is still completely invisible, $s$ moves to the extension of the segment $vv'$, again keeping $v$ as the virtual vertex (this move is always possible, although the destination of $s$ may have to be much closer to $v$ than $p'$ is: i.e., the two circles in Figure~\ref{f:8} do not necessarily have the same radius). After this move, both incident edges of $v$ will be at least partially visible. Then, $s$ approaches $v$ until it can see $vv'$ and $vv''$ entirely.

\begin{figure}[ht]
\centering
\includegraphics[scale=1.25]{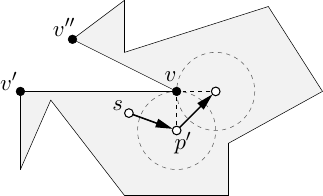}
\caption{Moving around the reflex vertex $v$ to see both its incident edges}\label{f:8}
\end{figure}

When $s$ finally sees both incident edges of $v$, it decides what its next virtual vertex $u$ will be. Let $s$ be currently located in $p$. Then, $u$ has to be a vertex of $P$ that is fully visible to every point on the segment $vp$. $s$ also has to choose a destination point $p'$, again fully visible to every point on $vp$. Moreover, among the vertices that are visible to $p'$, $u$ should be the closest to $p'$. Since by our assumption $u$ is fully visible to $p$, a suitable point $p'$ can always be found by $s$. Namely, if $u$ is a convex vertex of $P$, then $p'$ will be chosen close enough to $u$ on its angle bisector (an entire neighborhood of $u$ is visible to $p$, so this is easy to do). If $u$ is a reflex vertex, then $p'$ will be the center of a circle that touches the boundary of $P$ only in $u$.

\paragraph{Transporting snapshots.}
Recall that snapshots are encoded in a coordinate system defined by two vertices of $P$, which guarantees that the vertices in the snapshot can be encoded as algebraic points (provided that the vertices of $P$ are algebraic in some global coordinate system to begin with). Using the notation introduced in Section~\ref{s:4.1}, we will assume that all the $n$ snapshots that searcher $s$ is currently encoding are of the form $C(p_i,v,v')$, with $1\leq i\leq n$. In our notation, $p_i$ is the point from which the $i$th snapshot was taken, $v$ is the current virtual vertex of $s$, and $v'$ is another vertex of $P$ that is fully visible to all the points in the segment $vp$, where $p=p_{n+1}$ is the current location of $s$. So, all the snapshots that $s$ ``remembers'' are encoded in the same coordinate system, defined by $v$ and $v'$. Along with the snapshots, $s$ also remembers a rational approximation of $v'-v$, expressed in the local coordinate system of $s$. This approximation is assumed to be so good that $s$ can retrieve the coordinates of $v'$ (in its local coordinate system) by looking at its current snapshot. Knowing the coordinates of $v$ and $v'$, $s$ can then re-map every $C(p_i,v,v')$ into its local coordinate system, and compute arbitrarily good approximations of any algebraic point that it constructs implicitly.

Of course, as $s$ moves around $v$ and approaches it as explained before, it must always make sure that, whenever it moves from $p$ to $p'$, every point of $vp'$ is fully visible to $v'$. This is done by choosing $p'$ close enough to $v$, and it is possible because $v'$ is fully visible to all points of $vp$, by our assumption.

Suppose now that $s$, currently located in $p=p_{n+1}$, intends to change virtual vertex from $v$ to $u$. By our assumption, it does so only if $u$ is fully visible to all points of $vp$. In order to preserve the readability of the snapshots that $s$ is encoding, it has to convert them from the coordinate system in which they are expressed into a different one, which will allow $s$ to reconstruct the snapshots from a neighborhood of $u$. To do so, $s$ converts each $C(p_i,v,v')$ into $C(p_i,u,v)$. Since $s$ knows the exact positions of the three vertices involved (i.e., $u$, $v$, $v'$), because they are all in the current snapshot, it can perform this conversion, which is simply a change of coordinates computable with a rational function. Then $s$ encodes the current snapshot in the same coordinate system as the others, obtaining $C(p_{n+1},u,v)$. Finally, $s$ computes $(v+p)/2-u$ and suitably truncates the binary representation of its coordinates, obtaining a finite approximation $w$ of it. The approximation must be good enough, so that the point $u+w$ is in the interior of $P$ and closer to $v$ than any vertex of $P$ that is currently visible to $s$ (apart from $v$ itself). Then, $s$ computes a destination point $p'$ whose distance to $u$ encodes $w$ (whose binary representation is finite), followed by the snapshots $C(p_i,u,v)$, with $1\leq i\leq n+1$ (note that constructing $p'$ involves a square root extraction). When $s$ gets to $p'$, it finds its virtual vertex $u$ and retrieves $w$ (as an explicit rational vector) from $\|up'\|$. Then it computes $u+w$, which is an approximation of the midpoint of the segment $vp$. Since by our assumption all points of $vp$ are fully visible to $p'$, it is easy for $s$ to identify $v$. Now $s$ can retrieve the snapshots $C(p_i,u,v)$ from $\|up'\|$ and use $u$ and $v$ to re-map them into its own coordinate system.

\paragraph{Exploring the polygon.}
Suppose a $P$-searcher $s$ has successfully decoded its distance from its virtual vertex $v$, obtaining a history of snapshots. Since all these snapshots are expressed in the same coordinate system, it is trivial for $s$ to merge them all together and check if the common parts of two different snapshots match. If they do not match, it means that the current position of $s$ does not encode anything meaningful, and so $s$ moves to $v$. We have already explained how $s$ can reconstruct the coordinate system of these snapshots, and use it to encode the new snapshot taken from its current location in the same fashion. When all these snapshots (including the current one) have been tested against each other and merged, the result is a self-consistent \emph{collective snapshot} $S$, which is supposed to represent the part of $P$ that $s$ has already visited.

Also, whenever $s$ encodes its current snapshot, it marks the position of its virtual vertex with a special ``visited'' flag. This flag is preserved when a snapshot is transported and converted to a different coordinate system. So, when $s$ constructs the collective snapshot $S$, it also knows what vertices of $S$ have already been visited.

The EXPLORE phase begins with $s$ on a vertex $v$, and the strategy is to keep following the connected component $C$ of the boundary of $P$ that contains $v$, always in the same direction (either clockwise or counterclockwise), while encoding all the snapshots taken. This is easy to do, because we have explained how $s$ can adjust its position around its virtual vertex so that both its incident edges become fully visible.

Upon completing its first tour of $C$, $s$ has a full picture of it in the collective snapshot $S$, and starts a second tour of $C$ in the same direction, this time carefully choosing its destination points, as explained next. Let $v=v_1$, $v_2$, \dots, $v_m$ be the vertices of $C$, in the order $s$ is following them. In the second tour, for each $v_i$, $s$ wants to reach a point $p_i$ close enough to it, so that the polygonal chain $\overline C=(p_1,p_2,\dots, p_m)$ does not self-intersect (i.e., it is the boundary of a simple polygon), and does not intersect the boundary of $P$. For instance, $p_i$ may be chosen on the angle bisector of $v_i$ and close enough to it. So, upon reaching the angle bisector of $v_i$ (during the second tour), $s$ uses the information in $S$ to compute how close to $v_i$ it has to get to construct a suitable $p_i$. An adequate distance $d$ is computed implicitly, and then $s$ can get an approximation of it in explicit form and choose a distance that is certainly smaller than $d$.

When the second tour is complete and $s$ has touched all vertices of $\overline C$, it picks the first vertex $v'$ of $P$ that is in $S$ but is not yet marked as visited. Then $s$ follows a shortest path to $v'$ in which each vertex touched is fully visible to the previous one. Note that this path obviously exists, because if a vertex in the path does not fully see the next one, it can preliminarily move toward the closest vertex that is on the same line and is obstructing its vision. Once $v'$ has become the virtual vertex, $s$ follows the same exploration procedure on the connected component of the boundary of $P$ that contains $v'$, say $C'$, which is necessarily disjoint from $C$. During the second tour of $C'$, $s$ traces an approximated polygonal chain $\overline C'$ as before, but with the additional requirement that it does not intersect $\overline C$. This can be done in the same way as with $\overline C$, by computing a thin-enough ``band'' around $C'$ and making sure to move within it.

This general procedure is repeated as long as new vertices of $P$ are discovered and appear in $S$ as unvisited. Each time a new connected component $C_j$ of the boundary is discovered, $s$ follows it and constructs an approximation $\overline C_j$ that does not intersect any of the previously constructed ones and is also disjoint from the boundary of $P$. So, when the procedure ends, $s$ has touched the vertices of some mutually disjoint simple closed polygonal chains $\overline C_j$, with $1\leq j\leq m$, and each vertex of $P$ is either undiscovered or marked as visited. To prove the correctness of the EXPLORE phase, we have to show that in this situation all vertices of $P$ have indeed been discovered. Let us construct a new polygon $P'$ by removing every $C_j$ from the boundary of $P$ and replacing it with $\overline C_j$. $P'$ is indeed a polygon because of the way the $\overline C_j$'s have been constructed. Also, if two points fully see each other in $P'$, they must also fully see each other in $P$. Moreover, the unvisited vertices of $P$ are also vertices of $P'$. If some unvisited vertices exist, then one of them, say $u$, must be fully visible to a vertex $u'$ of some $C_j$, because $P'$ is connected. Since $u'$ is a vertex of $C_j$, $s$ must have been exactly in $u'$ and must have taken a snapshot from there. Recall that fully visible vertices are never marked as ``undefined'' in the encoded snapshots, and so $s$ must have carried around the implicit coordinates of $u$, which therefore must appear in the collective snapshot $S$. This is a contradiction, and therefore our exploration procedure is correct.

\paragraph{Basic patrolling.}
Let us show how to adapt the PATROL phase of Algorithm~\ref{f:2} to oblivious searchers. Each $P$-searcher that executes the EXPLORE phase correctly ends up with a collective snapshot $S$ that is a faithful copy of $P$ expressed in implicit form in a different coordinate system. So, all the similarity-invariant geometric constructions made by our old algorithm can be performed again by oblivious searchers. The first technical issue here is that some of the points generated by these constructions are not vertices of $P$ but midpoints of edges. Since we typically want the destination point of a searcher to be as close as possible to a vertex, we cannot make searchers turn at midpoints of edges as they patrol the boundary of $P$, but only turn at vertices.

Let us see how we can modify the tour of the boundary of $P$ so that it only turns at vertices, without invalidating the correctness of the algorithm. Patrolling the boundary of $P$ is what makes the map construction algorithm self-stabilizing: as a searcher repeatedly touches every vertex of $P$, it is able to tell if its memory contents are incorrect. So, we do not necessarily have to augment $P$ exactly as we did in Section~\ref{s:3.1}; we only have to make each searcher follow a path that touches every vertex. Of course, we also want two searchers with the same pivot point to follow the same path, so they necessarily meet while patrolling it. If $P$ is not axially symmetric, this is simple, and Algorithm~\ref{f:2} already does it. If $P$ is axially symmetric, we just make each searcher's path symmetric with respect to the axis that passes through its pivot point, in any (deterministic) way that we want. This is to make sure that if two searchers have that pivot point, they will compute the same path even if they have a different notion of clockwise direction. Figure~\ref{f:6} shows an example of how such a path may be constructed in an axially symmetric polygon. Observe that the pivot point in this example is in the middle of an edge, and the tour covers that edge twice. This makes the last vertex of the clockwise tour coincide with the first vertex of the counterclockwise tour, and vice versa.

\begin{figure}[ht]
\centering
\includegraphics[scale=0.9]{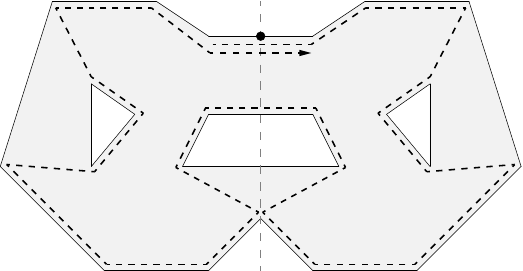}
\caption{Axially symmetric tour that visits all vertices and only turns at vertices}\label{f:6}
\end{figure}

There are two remaining issues. First, a searcher performing a tour of the boundary of $P$ without actually touching its vertices may be unable to ever detect if the collective snapshot that it is encoding has any mistakes. Second, two searchers may fail to meet even if they have the same pivot point and perform the same tour in opposite directions, because they only trace an approximation of that tour. Fortunately, both these issues have the same solution, which is the one we already described for the EXPLORE phase. Namely, as a searcher already has a full picture of $P$ (or what it ``believes'' to be $P$) it can pre-compute a thin-enough ``band'' around the path it intends to follow, and always move within this band. This can be done even if the band is only computed implicitly, as we showed for the EXPLORE phase. Also, whenever a searcher reaches a new virtual vertex, it makes sure to stop on its angle bisector and take a snapshot from there. This way, if $P$ and the collective snapshot of the searcher have discrepancies, the searcher will eventually see a missing vertex, a misplaced vertex, or an extra vertex, and this is proven exactly as we did for the EXPLORE phase. Moreover, as both searchers remain within the same thin band, they must become mutually aware as soon as they cross each other on the same edge or around the same vertex. Note that computing a band with these properties is a straightforward geometric problem that can be solved locally by the searchers given a representation of $P$.

The rest of the proof of correctness is the same as for Theorem~\ref{t:2}, which is thus extended to oblivious searchers.

\begin{theorem}
There is an algorithm that, for every integer $\sigma>0$, solves the Meeting problem with $\sigma+1$ oblivious searchers in every polygon with symmetricity $\sigma$. If the polygon's vertices are algebraic points, the algorithm is implementable on a real random-access machine that can compute basic arithmetic operations and extract square roots.\qed
\end{theorem}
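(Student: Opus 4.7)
The plan is to show that the machinery developed in Section~\ref{s:4.1} and the earlier paragraphs of Section~\ref{s:4.2} suffices to let oblivious searchers faithfully simulate Algorithm~\ref{f:2}, so that correctness reduces to Theorem~\ref{t:2}. At each Look phase, a searcher inspects its snapshot, identifies its virtual vertex $v$ as the closest fully visible vertex of $P$, and decodes, from its distance to $v$, a collective snapshot $S$ in implicit algebraic form together with a finite approximation of a companion vertex $v'$ used to fix the coordinate system. Using the snapshot-transport protocol, it either (i) detects an inconsistency between $S$ and the current snapshot, in which case it walks onto $v$ to reset the execution, or (ii) recomputes the virtual behavior of Algorithm~\ref{f:2} ``as if'' it were located at $v$, choosing its next virtual vertex $u$ fully visible from the entire segment $vp$, encoding the extended collective snapshot (re-expressed in the $(u,v)$-frame) as the distance to $u$, and moving to the corresponding destination point on the appropriate ray from $u$. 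Because the scale parameter $\lambda$ is a free parameter of the encoding, the destination can always be placed close enough to $u$ to keep $u$ the unique closest visible vertex, so the encoding remains readable after the move.

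Next I would verify the EXPLORE phase. Each searcher traces an initial tour of the current boundary component $C$, then a second tour along a thin simple closed polygonal chain $\overline C$ running close to $C$ and hitting each $p_i$ on the bisector of $v_i$; the chains $\overline C_j$ built around successive boundary components are kept mutually disjoint and disjoint from $\partial P$. Replacing each $C_j$ with $\overline C_j$ yields a polygon $P'$ that is combinatorially equivalent to $P$ from the standpoint of full visibility, and an unexplored vertex of $P$ would then have to be fully visible from some $v_i$ the searcher actually occupied, contradicting that fully visible vertices always appear in $S$. This is precisely the argument sketched in the exploration paragraph and mirrors the use of connectedness of the visibility graph in Theorem~\ref{t:2}.

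For the PATROL phase I would modify the augmentation of $P$ so that the tour turns only at genuine vertices of $P$; when $P$ is axially symmetric, I would make the tour symmetric about the axis through the pivot point, as in Figure~\ref{f:6}, so that two searchers sharing that pivot point compute identical tours even without a common handedness (and, when the pivot point falls on an edge, the tour traverses that edge twice, joining the clockwise and counterclockwise traversals). Each searcher then patrols inside a precomputed thin tubular ``band'' around the intended tour, always stopping on an angle bisector at each vertex; this guarantees that (a) every vertex of $P$ is re-inspected infinitely often, so the self-stabilizing consistency check of the EXPLORE/PATROL interface will still fire if the collective snapshot ever disagrees with $P$, and (b) two searchers with the same pivot point, traversing the same band in opposite directions, necessarily become mutually aware when they cross, since the band is much narrower than the visibility width along the tour.

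The main obstacle is this last ``thin band'' argument: the bands are defined implicitly in terms of algebraic quantities, but the destination points actually reachable by a Blum--Shub--Smale machine extended with square roots are only finite (compass-constructible) approximations. I would handle it exactly as in the exploration paragraph, using Sturm's theorem to compute a rational lower bound on the band width around each vertex and edge, and then picking destination points well inside that bound. Once this is in place, the proof of Theorem~\ref{t:2} applies verbatim to the simulated execution: after finitely many restarts the collective snapshot is correct and stable, pivot points are chosen in a similarity-invariant way, and the pigeonhole argument on $\sigma+1$ searchers forces two of them to share a pivot point, hence the same band and the same tour, and therefore to meet. The computability claim follows because every step above---encoding, decoding via Sturm's theorem, coordinate changes, angle bisectors, midpoints, and circle intersections---requires only the four arithmetic operations and square roots, i.e.\ compass constructibility by the Mohr--Mascheroni theorem. \qed
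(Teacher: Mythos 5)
Your proposal is correct and follows essentially the same route as the paper: encode the snapshot history as the distance to the virtual vertex in an implicit algebraic frame, transport the code from vertex to vertex, drive the simulated Algorithm~\ref{f:2} with thin implicitly-computed bands around the EXPLORE chains and the PATROL tour, and reduce correctness to Theorem~\ref{t:2} once the self-stabilizing map is correct; the computability claim is closed out by Sturm's theorem and compass-constructibility exactly as the paper does. No substantive differences to report.
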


\paragraph{Improved patrolling.}
We can adapt the improved patrolling strategy of Algorithm~\ref{f:4} to oblivious searchers almost in the same way as we did with the basic one. However, the $j$-tours and their ``hierarchy'' must be defined carefully. The problem is once again that we have to decide what to do with the triangles of the triangulation of $\widetilde P$ whose vertices are midpoints of edges or of diagonals of $P$ (cf.~Figure~\ref{f:5}). If $P$ is not axially symmetric, this is not a real problem: we can augment $P$ by cutting it along some diagonals, never creating those improper vertices. The polygons $P_j$ and the $j$-tours are then defined in the same way as in Section~\ref{s:3.2}.

Let us now focus on the case in which $P$ is axially symmetric, and let a \emph{branch} be a connected component of $P\setminus Q$, where $Q$ is the central polygon, as defined in Section~\ref{s:3.2}. In this case, we want our $j$-tours to be axially symmetric, as well. Our solution is to preliminarily construct an axially symmetric partition of each branch of $P$ into triangles and isosceles trapezoids (a trapezoid is \emph{isosceles} if its base angles are the same). This is done in a similarity-invariant way by drawing diagonals between vertices of $P$ that can fully see each other, as shown in Figure~\ref{f:7}. Note that this is made possible by the presence of isosceles trapezoids, because a symmetric branch may not have a symmetric triangulation.

\begin{figure*}[ht]
\centering
\includegraphics[scale=1]{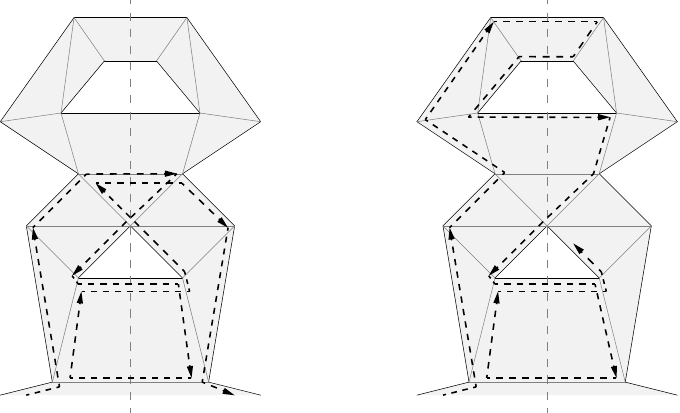}
\caption{Symmetric partition of a branch with a $5$-tour and part of a perimeter tour}\label{f:7}
\end{figure*}

Let us cut each branch along its axis of symmetry, obtaining two \emph{sub-branches}. Let us add cuts along symmetric edges of the partition constructed before until we obtain a simply connected augmented polygon $\widetilde P$. As in Section~\ref{s:3.2}, the dual graph of the resulting partition of $\widetilde P$ is a tree rooted in the central polygon $Q$. We then define a \emph{$j$-tour} as the tour of the pieces of the partition whose depth in the tree is at most $j$; such a tour must never cross the edges of $\widetilde P$, with one exception. Let $\ell$ be the axis of symmetry of a branch. When the tour is in that branch and it is supposed to follow a sub-segment of $\ell$ that splits a triangle or a trapezoid in two symmetric parts, the $j$-tour includes both parts. This construction is illustrated in Figure~\ref{f:7}. Our new $j$-tours have the required axes of symmetry and only turn at vertices of $P$. Note that a $j$-tour may now self-overlap and touch some vertices of $P$ multiple times, but this is not going to be a problem.

These new $j$-tours have the relevant property that was required by Lemma~\ref{l:2}: if a piece $T$ of the partition is included in the $(j+1)$-tour but not in the $j$-tour, then an edge $e$ of $T$ is part of the $j$-tour. Since $T$ is either a triangle or a trapezoid, it is convex. Therefore, a searcher performing a $j$-tour completely sees $T$ when it touches $e$. Due to this property, the proof of Lemma~\ref{l:2} goes through even for our new $j$-tours. The same holds for Lemma~\ref{l:1}, which only requires the convexity of the pieces.

It remains to explain how the new $j$-tours can be approximated by oblivious searchers without losing the aforementioned properties. We view a $j$-tour as a closed polygonal chain enclosing some pieces of the partition. The key idea is to make a searcher perform a $j$-tour by following this polygonal chain without ever properly crossing it. This way, the searcher never enters pieces of the partition where it is not supposed to go, yet. Also, when the $j$-tour covers an edge $e$ as defined above, the searcher makes sure to effectively touch $e$ (without crossing it), so to see any searcher that may be in $T$. Note that the searcher cannot explicitly compute a point on $e$ based on its implicit collective snapshot. However, by approaching one endpoint of $e$, it eventually gets to see the other one, as well (because the endpoints fully see each other). When both endpoints are visible, and hence readable in explicit form from the current snapshot, a precise move on $e$ is possible.

Other than this, $j$-tours are approximated as with basic patrolling, i.e., remaining within implicitly defined thin-enough bands around them. All these features combined enforce the properties that make Lemmas~\ref{l:1} and~\ref{l:2} valid for approximated $j$-tours. We still have to guarantee that any discrepancy between $P$ and the collective snapshot of a searcher $s$ will be detected during a perimeter tour. This is done as with basic patrolling, by making $s$ stop on the angle bisector of each vertex $v$ (and close enough to $v$). This may not be possible right away, because when $s$ reaches $v$ for the first time it may be forced to remain within a piece of the partition that does not contain the angle bisector of $v$. However, since the perimeter tour encloses all pieces of the partition, eventually $s$ will reach $v$ again and will be allowed to stop on the angle bisector.

This concludes the proof that Theorem~\ref{t:3} can be extended to oblivious searchers.

\begin{theorem}
There is an algorithm that solves the Meeting problem with two oblivious searchers in every polygon whose barycenter does not lie in a hole. If the polygon's vertices are algebraic points, the algorithm is implementable on a real random-access machine that can compute basic arithmetic operations and extract square roots.\qed
\end{theorem}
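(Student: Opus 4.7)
The plan is to reduce the oblivious case to the memory-equipped case handled by Theorem~\ref{t:3}, using the positional encoding machinery developed in Section~\ref{s:4}. First, I will describe how a searcher simulates persistent memory: at each Look phase it identifies its (unique) closest visible vertex as the \emph{virtual vertex} $v$, measures its distance to $v$, and decodes this distance into a history of snapshots expressed in a coordinate system anchored at a second vertex $v'$. If no unique virtual vertex exists, or the decoded data is internally inconsistent or disagrees with the current snapshot, the searcher moves onto $v$ and restarts the simulated execution from scratch. Otherwise, it simulates one Compute step of Algorithm~\ref{f:4} on the decoded memory together with the new snapshot, obtaining a destination that is either the current virtual vertex or a new one, and encodes the updated memory as its new distance from the (possibly new) virtual vertex, transporting snapshots between coordinate systems as explained in Section~\ref{s:4.2}.

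The EXPLORE phase is handled by the strategy already described under \emph{Approaching vertices} and \emph{Exploring the polygon}: the searcher tours each connected component $C_j$ of the boundary while keeping both incident edges of its current virtual vertex fully visible, then retraces a slightly shrunken polygonal chain $\overline{C_j}$ inside a thin implicit band, and finally follows a fully-visible shortest path to any remaining unvisited vertex. The argument based on the auxiliary polygon $P'$ obtained by replacing each $C_j$ with $\overline{C_j}$ shows that once no new unvisited vertex appears, the collective snapshot faithfully represents $P$. For the PATROL phase, I take the improved patrolling recipe given just before the theorem: augment $P$ by the symmetric partition of each branch into triangles and isosceles trapezoids, cut along the axes of symmetry to obtain $\widetilde P$ and its rooted tree of pieces, and define the $j$-tours as closed polygonal chains that turn only at vertices of $P$ and are symmetric with respect to the axis through the pivot vertex. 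A searcher realizes an approximated $j$-tour by moving inside a thin implicit band around this chain, never crossing the chain, but actually touching each edge $e$ separating a depth-$j$ piece from a depth-$(j{+}1)$ piece; to touch $e$ it approaches one endpoint until the other becomes visible and then performs an explicit move on the now explicit segment.

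Correctness then follows by checking that Lemmas~\ref{l:1} and~\ref{l:2} survive this approximation. Lemma~\ref{l:1} only uses the convexity of the pieces and the fact that a $j$-tour visits every piece of $P_j$; both are preserved because triangles and isosceles trapezoids are convex and the band does not omit pieces. Lemma~\ref{l:2} relies on the property that when $s_1$ performs a $j$-tour it sees every piece of $P_{j+1}\setminus P_j$ in which $s_2$ could be, which is preserved because $s_1$ actually touches the separating edge $e$ and convexity exposes the adjacent piece. The concordant/discordant case analysis and the meeting arguments on shared edges or at the common pivot then carry over verbatim from the proof of Theorem~\ref{t:3}. Self-stabilization of the map is ensured exactly as in the basic patrolling paragraph: during each perimeter tour the searcher eventually stops on the angle bisector of every vertex (possibly only on the second visit, since on the first one it may be confined to a single piece), which suffices to detect any discrepancy between $P$ and its collective snapshot and to trigger a reset.

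The main obstacle will be choosing the thin bands so that approximation never destroys the meeting guarantees. Concretely, I need to prove that the bands around $\overline{C_j}$ in EXPLORE and around the $j$-tour in PATROL can be made narrow enough that every point visible from the idealized path remains visible from some point of the band, and that two searchers sharing a pivot and traversing the same band in opposite directions cannot slip past each other. This reduces to a uniform-continuity style argument on the visibility function, together with the observation that band widths are defined implicitly via algebraic computations on the collective snapshot, so the searcher can always shrink them further if needed even without an explicit a priori bound. Once this is in place, the computability claim is immediate: every destination point is obtained from vertices of $P$ by the four arithmetic operations and square-root extractions (distance encoding, midpoints, angle bisectors, band widths, and the touch-$e$ moves), so an extended Blum--Shub--Smale machine with a square-root primitive, equivalently a compass, suffices to execute the algorithm.
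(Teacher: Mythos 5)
Your proposal mirrors the paper's own proof step by step: virtual-vertex encoding and snapshot transport, the shrunken polygonal chains $\overline{C_j}$ for the EXPLORE phase with the auxiliary polygon $P'$ argument, the symmetric partition of each branch into triangles and isosceles trapezoids, the modified $j$-tours that turn only at vertices and touch the separating edge $e$, the observation that Lemmas~\ref{l:1} and~\ref{l:2} only need convexity of the pieces, the angle-bisector-on-a-later-visit trick for self-stabilization, and the extended Blum--Shub--Smale/compass computability remark. The one thing you flag as an obstacle (that the bands can be chosen thin enough) is also treated informally in the paper, so the approaches align throughout.
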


\section{Conclusions and Further Work}\label{s:5}
\paragraph{Summary.}
We have minimized the number of searchers that are required to solve the Meeting problem in an unknown polygon as a function of its symmetricity. Additionally, we showed that two searchers are sufficient in all but a small class of polygons (namely, the rotationally symmetric ones with center in a hole). We have done so even if the searchers are anonymous, asynchronous, memoryless, and can be initially located anywhere in the polygon. Moreover, if the vertices of the polygon are algebraic points in a global coordinate system, the searchers only have to compute basic arithmetic operations and square roots. As a main tool, we have used a self-stabilizing map construction algorithm of independent interest.

\paragraph{Termination detection.}
An interesting question is wheth\-er two searchers can realize when they have become mutually aware and actually \emph{terminate} the execution of the Meeting algorithm. This is not a trivial problem, because searchers are asynchronous: a searcher $s_1$ that sees another searcher $s_2$ cannot in general be sure that $s_2$ is not going to disappear behind a corner, because $s_2$ may currently be in the middle of a movement. So, $s_1$ may have to wait indefinitely to find out. However, we can show that termination is possible if the searchers execute a slightly modified version of Algorithm~\ref{f:2}, provided that the polygon has no holes. On the other hand, the termination problem remains open with regard to Algorithm~\ref{f:4} and polygons with holes.

\paragraph{Limited visibility.}
We may wonder if the Meeting problem can be solved if searchers have \emph{limited visibility}, i.e., they can only see up to a fixed distance, which is the same for all searchers. If the searchers have memory, we can adapt our algorithms of Section~\ref{s:3} by making each searcher take small-enough steps and also explore the interior of the polygon, as opposed to just its boundary, in order to detect hidden holes. The improved PATROL phase works by splitting each branch into thin-enough sub-branches and then finely triangulating each sub-branch. If searchers are memoryless, our algorithm with basic patrolling can also be adapted, provided that the polygon has no holes, and that data is encoded as the distance from the boundary of the polygon (as opposed to the distance from the closest vertex). In all other cases, the Meeting problem for memoryless searchers with limited visibility is open.

\paragraph{Non-rigid movements.}
Our algorithms for searchers with memory also work in the \emph{non-rigid} setting, i.e., when a searcher can be stopped by the scheduler during each Move phase before reaching its destination point, but not before having moved by at least a constant $\delta$ (for details on this model, refer to~\cite{librosantoro}). However, since our oblivious searchers have to make precise movements to implicitly encode memory, we cannot extend our memoryless algorithms to this model.

\paragraph{Optimizing movements.}
An interesting optimization problem is to improve our algorithms so that the total distance traveled by the searchers or the number of steps they take is minimized. In the EXPLORE phase we could visit the visibility graph of the polygon in depth-first order, which would yield a linear number of steps with respect to the number of vertices of the polygon. Our basic PATROL phase is worst-case optimal, because the searchers must visit the entire boundary of the polygon (due to their possibly incorrect initial memory states), and they indeed meet after a constant number of tours. Our improved PATROL phase could be optimized, because we chose to perform many more perimeter stages than needed. In fact, we could reduce this number from quadratic to linear in the number of vertices of the polygon.

\medskip
\paragraph{Acknowledgments.}
The authors wish to thank Francesco Veneziano for a clarifying discussion. This research has been supported in part by the Natural Sciences and Engineering Research Council of Canada under the Discovery Grant program and by Prof.~Flocchini's University Research Chair.

\end{document}